\newcommand{\bm}[1]{\boldsymbol{#1}}
\newcommand{\E}{\mathbb{E}}
\renewcommand{\P}{\mathbb{P}}
\newcommand{\Y}{\bm{Y}}
\newcommand{\bG}{\bm{\Gamma}}
\newcommand{\V}{\bm{V}}
\newcommand{\wt}[1]{\widetilde{#1}}
\def\A{\mbox{\boldmath$A$}}
\def\C{\mbox{\boldmath$C$}}
\def\V{\mbox{\boldmath$V$}}
\def\R{\mbox{\boldmath$R$}}
\def\zero{\mbox{\boldmath$0$}}
\def\bPhi{\mbox{\boldmath$\Phi$}}
\def\bSigma{\mbox{\boldmath$\Sigma$}}
\def\bUpsilon{\mbox{\boldmath$\Upsilon$}}
\def\B{\mbox{\boldmath$B$}}
\newcommand{\ind}{\stackrel{ind}{\sim}}
\def\bomega{\mbox{\boldmath$\omega$}}
\def\bepsilon{\mbox{\boldmath$\epsilon$}}
\def\bSigma{\mbox{\boldmath$\Sigma$}}
\def\e{\mbox{\boldmath$e$}}
\def\E{\mathbb{E}}
\renewcommand{\Y}{\bm{Y}}
\newcommand{\iid}{\stackrel{iid}{\sim}}
\newcommand{\wh}[1]{\smash{\widehat{#1}}}
\newcolumntype{C}{@{\extracolsep{0.5in}}c@{\extracolsep{0pt}}}
\def\C {\,|\:}
\def\bomega{\bm{\omega}}
\def\C {\,|\:}
\def\B{\bm{B}}
\def\BGamma{\bm{\Gamma}}
\def\BD{\bm{\Delta}}
\def\BOmega{\bm{\Omega}}
\renewcommand{\e}{\mathrm{e}}
\newcommand{\N}{\mathbb{N}}
\renewcommand{\R}{\mathbb{R}}
\newtheorem{lemma}{Lemma}[section]
 \theoremstyle{assumption}
\begin{document}

\def\spacingset#1{\renewcommand{\baselinestretch}%
{#1}\small\normalsize} \spacingset{1}


  \title{{\sf 
   Dynamic Sparse Factor Analysis}}

  \author{McAlinn K.\footnote{Senior Research Professional at the University of Chicago Booth School of Business},  Rockova V.\footnote{Assistant Professor in Econometrics and Statistics and James S. Kemper Faculty Scholar at the University of Chicago Booth School of Business } and Saha E.\footnote{Ph.D. Student at the  University of Chicago Department of Statistics}}
  
   \maketitle

\bigskip
\begin{abstract}
Its conceptual appeal and effectiveness has made latent factor modeling an indispensable  tool for multivariate analysis.
Despite its popularity across many fields, there are outstanding methodological challenges that have hampered practical deployments.
One major challenge is the selection of the number of factors, which is exacerbated for  {\sl dynamic} factor models, where  factors can  disappear, emerge, and/or reoccur over time. Existing tools that assume a fixed number of factors may provide a misguided representation of the data mechanism, especially when the number of factors is crudely misspecified.
 Another challenge is the interpretability of the factor structure, which is often regarded as an unattainable objective due to the lack of identifiability.
Motivated by a topical macroeconomic application, we develop a flexible Bayesian method for dynamic factor analysis (DFA) that can simultaneously accommodate  a time-varying number of factors and enhance interpretability without strict identifiability constraints.
To this end, we turn to dynamic sparsity by employing Dynamic Spike-and-Slab (DSS) priors within DFA. 
Scalable Bayesian EM estimation is proposed for  fast posterior mode identification via rotations to sparsity, enabling Bayesian data analysis at scales that would have been previously time-consuming. 
We study a large-scale balanced panel of macroeconomic variables covering multiple facets of the US economy, with a focus on the Great Recession, to highlight the efficacy and usefulness of our proposed method.
\end{abstract}

\noindent%
{\it Keywords:} Dynamic Sparsity, Factor Analysis, Spike-and-Slab, Time Series.
\vfill

\newpage
\spacingset{1.45} 



\section{Introduction \label{sec:intro}}
The premise of dynamic factor analysis (DFA) is fairly straightforward:  there are unobservable commonalities in the variation of observable time series, which can be exploited for interpretation, forecasting, and decision making.
Dating back to, at least, \cite{burns1947measuring}, the fundamental idea that a small number of indices drive co-movements of many time series has found plentiful empirical support across a wide range of applications including economics \citep{stock2002forecasting,bai2002determining,bernanke2005measuring,baumeister2010changes,cheng2016shrinkage}, finance \citep{diebold1989dynamics,aguilar1998bayesian,pitt1999time,aguilar2000bayesian,carvalho2011dynamic}, and ecology \citep{zuur2003estimating}, to name just a few. 
More notably, in their  seminal work  on DFA, \cite{sargent1977business} showed that two dynamic factors could explain a large fraction of the variance of U.S. quarterly macroeconomic variables.
Motivated by a similar (but significantly larger) application, we develop scalable Bayesian DFA  methodology and deploy it to  glean insights into the hidden drivers of the U.S. macroeconomy before, during and after  the Great Recession. 

With  large-scale cross sectional data becoming readily available, the need for developing scalable and reliable tools adept at capturing complex latent dynamics 
have spurred in both statistics  and econometrics \citep{beyeler2016factor,kaufmann2017identifying,fruehwirth2018sparse,nakajima2017dynamics}.
While ``dynamic factor models have been the main big data tool used over the past 15 years by empirical macroeconomists" \citep{stock2016dynamic}, there are remaining methodological challenges. It is now commonly agreed that high-dimensional inference can hardly be formalized and executed without any sparsity assumptions. 
The fundamental goal of our research is to facilitate sparsity  {\sl discovery} (i.e. data-informed sparsity), when in fact present. 
In doing so, we keep in mind  three main pillars that we regard as essential for building a stable foundation  for sparse factor modeling.

Firstly, the latent factor loadings should account for time-varying patterns of sparsity.
In (macro-)economics and finance, the sequentially observed  variables may go through multiple periods of shocks, expansions, and contractions \citep{hamilton1989new}.
It is thus expected that the underlying latent structure  changes over time-- either gradually or suddenly-- 
where  some factors might be active at all times, while others only at certain times.
For example, in our empirical analysis we find  that certain factors exert influence on some series only during a crisis and later permeate through different components of the economy as the shock spreads.
Dynamic sparsity plays a very compelling role in capturing and characterizing such dynamics. 
Recent developments in sparse factor analysis reflect this direction of interest \citep{West2003,carvalho2008high,yoshida2010bayesian,lopes2010cholesky}.
More recently, \cite{nakajima2012dynamic} deployed the latent threshold approach of \cite{nakajima2013bayesian} in order to induce zero loadings dynamically over time.
Our methodological contribution builds on this development, but poses far less practical limitations on the dimensionality of the data and far less constraints on identification.

Related to the previous point is the question of selecting the number of factors.
This modeling choice is traditionally determined by a combination of {\it a priori} knowledge, a visual inspection of the scree plot \citep{onatski2009testing}, and/or information criteria \citep{bai2002determining,hallin2007determining}.
In the presence of model uncertainty, the Bayesian approach affords the opportunity to assign a probabilistic blanket over various models. Bayesian non-parametric approaches have been considered for estimating the factor dimensionality using sparsity inducing priors \citep{bhattacharya2011sparse,rovckova2016fast}.
The added difficulty stemming from time series data, however, is that the number of factors {\sl may change over time}. 
Despite plentiful empirical evidence for this behavior in macroeconomic data \citep{bai2002determining},  the majority of existing DFA tools treat the number of factors as fixed over time.
As a remedy, we turn to dynamic sparsity as a compass for determining  the number of factors without necessarily committing to one fixed number ahead of time.


The third essential requirement is accounting for structural instabilities over time with  time-varying loadings and/or factors. One seemingly simple solution has been  to deploy rolling/extending window approaches to obtain pseudo-dynamic loadings.
These estimates, however, lack any supporting probabilistic structure that would induce  smoothness and/or capture sudden dynamics.
Recent DFA developments \citep{del2008dynamic,nakajima2013bayesian}
have   treated both the factors and loadings as stochastic and  dynamic. 
Adopting this point of view,  we blend  smoothness with sparsity  via  Dynamic Spike-and-Slab  (DSS) priors on factor loadings \citep{rockova2018dynamic}.
This prior regards factor loadings  as arising from a mixture of two states: an inactive state represented by very small loadings and an active state  represented by smoothly evolving large loadings.
The mixing weights between these two states themselves are time-varying, reflecting past information  to prevent from erratic regime switching.
The DSS priors allow latent factors to effectively, and smoothly, appear or disappear from each series, tracking the evolution of sparsity over time.

In this work, we develop methodology for sparse dynamic factor analysis that is built on the three foundational principles mentioned above.
Using this methodology, we examine a large-scale balanced panel of macroeconomic indices that span multiple corners of the  U.S. economy from 2001 to 2015.
Our method  helps understand how the economy evolves over time and how shocks affect its individual components.
In particular, examining the  latent factor structure before, during, and after the Great Recession, we obtain insights into the channels of dependencies  and we assess permanence of structural changes.

To ensure that our implementation scales with large datasets, we propose an EM algorithm for MAP estimation that recovers evolving sparse latent structures in a fast and potent manner.
As the EM algorithm finds a likely sparse structure, it does not require strong identification constraints that would be needed for MCMC simulation. 
While interpretation can be achieved with ex-post rotations \citep{bai2013principal,kaufmann2017identifying}, here we deploy rotations to sparsity {\sl inside} the EM algorithm along the lines of \cite{rovckova2016fast} to (a) accelerate convergence and (b) obtain better oriented sparse solutions.

The paper is structured as follows.
Section~\ref{sec:method} outlines the dynamic sparse factor model.
Section~\ref{sec:comp} summarizes our EM estimation strategy.
A detailed simulation study that highlights our strategy relative to other methods is  in Section~\ref{sec:simstudy}.
An empirical study on a large-scale macroeconomic dataset is in Section~\ref{sec:study}.
We conclude the paper with additional comments in Section~\ref{sec:conclusion}. 
Details of the implementation are in the Supplementary Materials.

\section{Dynamic Sparse Factor Models \label{sec:method} }
The data setup under consideration consists of a matrix of high-dimensional multivariate time series $\Y=[\Y_1,\dots, \Y_T]\in \mathbb{R}^{P\times T}$, where each  vector $\Y_t\in\R^P$ contains a snapshot of continuous measurements at time $t$.
Dynamic factor models are built on the premise that  there are only a few latent factors that drive the co-movements of $\Y_t$.
Evolving covariance patterns of time series  can be captured with the following state space model:
\begin{align}
	\Y_t &= \B_t\bomega_t+\bepsilon_t, \quad \bepsilon_t\ind \mathcal{N}_P(\zero,\bSigma_t),\label{eq:lf1}\\
	\bomega_t &=\bPhi\bomega_{t-1} +\bm\e_t, \quad \bm\e_t\ind \mathcal{N}_K(\zero,\,\sigma^2_\omega\mathbb{I}_K)\label{eq:lf2},
\end{align}
which extends the more standard dynamic factor models \citep{sargent1977business,geweke1977} in at least two ways.
First, the  observation equation \eqref{eq:lf1} links  $\Y_t$ to a vector of factors $\bomega_t$ through multivariate regression  with  loadings $\B_t\in \mathbb{R}^{P\times K}$ and with residual variances $\bSigma_t=\mathrm{diag}\{\sigma_{1t}^2,\dots,\sigma_{Pt}^2\}$, where {\sl both} $\B_t$ and $\bSigma_t$ are {\sl dynamic}, i.e. are allowed to  evolve over time. In this section, we  tacitly assume that any location shifts in  $\Y$ have been standardized away and thereby we omit an intercept in \eqref{eq:lf1}. The (dynamic) intercept can be however included, as we demonstrate in Section \ref{sec:study}.
Second, the transition equation \eqref{eq:lf2}  describes the unobserved regressors  $\bomega_t$ as following a stationary autoregressive process with a transition matrix $\bPhi=\wt\phi\,\mathbb{I}_K$ for some $0<\wt\phi<1$ and with Gaussian disturbances $\bm\e_t$ with a known variance $\sigma^2_\omega>0$. 
As is customary with state-space models of this type,  we assume that $\bomega_t,$ $\bm\e_t$ and $\bepsilon_t$ are cross-sectionally independent.

A related approach was proposed in \cite{aguilar2000bayesian} and \cite{lopes2007factor}, who also permit time-varying loadings, but do not impose the AR(1) process on the factors. Instead, their factors are cross-sectionally independent and linked over time through a stochastic volatility evolution of their idiosyncratic variances.
\cite{bai2002determining}  and \cite{stock2010modeling}, on the other hand, assume that factors follow vector autoregression, but the loadings are constant over time. As in \cite{nakajima2012dynamic}, our model \eqref{eq:lf1} and \eqref{eq:lf2}   differs from these more standard dynamic factor model formulations because it combines the AR(1) factor aspect together with dynamic loadings.


The equations \eqref{eq:lf1} and \eqref{eq:lf2} imply that, marginally, $\Y_t\sim\mathcal{N}_P(0,\wt\bSigma_t)$, where $\wt\bSigma_t=\sigma^2_{\omega}/(1-\wt\phi^2)\B_t\B_t'+\bSigma_t$. This decomposition provides a fundamental justification for factor-based dynamic covariance modeling. The information in  high-dimensional vectors $\Y_t$ is distilled through latent factors into lower-dimensional factor loadings matrices $\B_t$, which {\sl completely} characterize the movements of covariances  over time.  Other authors \citep{del2008dynamic,lopes2007factor} consider a stochastic volatility (SV) evolution (either log-AR(1) or Bayesian discounting) on the variance of the latent factors and/or the innovations $\bepsilon_t$ in  \eqref{eq:lf1}. While both are feasible within our framework, here we impose Bayesian discounting SV formulation on the innovation variances: $\sigma_{jt}=\sigma_{jt-1}\delta/\upsilon_{jt},$ where $\delta\in(0,1]$ is a discount parameter and where $\upsilon_{jt}\sim\mathcal{B}(\delta\eta_{t-1}/2,(1-\delta)\eta_{t-1}/2)$ with $\eta_t=\delta\eta_{t-1}+1$ \citep[Ch. 4.3.7][]{Prado2010}.

Parsimonious covariance estimation is only one of the objectives of dynamic factor modeling. The more traditional objective is disentangling the covariance structure and understanding its driving forces  and how they change over time. {\sl Sparse} modeling has been indispensable for both of these objectives, where fewer estimable coefficients  yield far more stable covariance estimates and where nonzero patterns in $\B_t$ yield superior interpretable characterizations \citep{carvalho2008high,yoshida2010bayesian}. Next, we explore the role of dynamic sparsity in DFA.

\subsection{Dynamic Sparsity with  Shrinkage Process Priors}\label{sec:priors}
No  assumption has been as pervasive in the analysis of high-dimensional data as the one of sparsity.
Sparsity is a practical modeling choice that facilitates high-dimensional inference and/or computation. In factor model contexts, it can also be used to anchor on identifiable parametrizations \citep{fruhwirth2010parsimonious} and/or for estimating factor dimensionality \citep{rovckova2016fast,bhattacharya2011sparse}. The potential  of sparsity  in dynamic factor models has begun to be recognized \citep{nakajima2012dynamic,beyeler2016factor,kaufmann2017identifying}.




In this work, we complement the factor model formulation \eqref{eq:lf1} with dynamic sparsity priors on the factor loadings $\B_t$ for $1\leq t\leq T$. In other words, rather than imposing a dense model by assigning  a random walk (or a stationary autoregressive) prior on the loadings \citep[such as][]{stock2002forecasting,del2008dynamic}, we allow for the possibility that the loadings are zero at certain times.

We will write $\B_t=(\beta_{jk}^t)_{j,k=1}^{P,K}$ and impose a shrinkage process prior on individual time series $\{\beta_{jk}^t\}_{t=1}^T$ for each $(j,k)$.
A few authors have reported on the benefits of  dynamic variable selection in the analysis of macroeconomic data \citep{fruhwirth2010stochastic,bitto2016achieving,lopes2010cholesky,nakajima2012dynamic,koop2010bayesian}. We build on one of the more recent developments, the Dynamic Spike-and-Slab (DSS) priors proposed by \cite{rockova2018dynamic}.

DSS priors are dynamic extensions of spike-and-slab priors for variable selection \citep{george1993variable,rovckova2016spike}.
Each coefficient in DSS is thought of as arising from  two latent states: (1) an {\sl inactive} state, where the coefficient meanders randomly around zero,  and  (2) an {\sl active} state, where the coefficient walks on an autoregressive path. The switching between these two states is driven by a {\sl dynamic} mixing weight which depends on past values of the series,  making the states less erratic over time.


We begin by reviewing the conditional specification of the DSS prior.  For each coefficient $\beta_{jk}^t$, we have a binary indicator $\gamma_{jk}^t\in\{0,1\}$, which encodes the state of $\beta_{jk}^t$ (the ``spike" inactive state for $\gamma_{jk}^t=0$ and the ``slab" active state for $\gamma_{jk}^t=1$). Given  $\gamma_{jk}^t$ and a lagged value $\beta_{jk}^{t-1}$, we assume a conditional mixture prior (independently for each $(j,k)$):
\begin{equation}\label{betas}
\pi(\beta_{jk}^t|\gamma_{jk}^t,\beta_{jk}^{t-1})=(1-\gamma_{jk}^t)\psi_0(\beta_{jk}^t|\lambda_0)+\gamma_{jk}^t\psi_1\left(\beta_{jk}^t\,|\,\mu(\beta_{jk}^{t-1}),\lambda_1\right),
\end{equation}
where
\begin{equation}\label{mu}
\mu(\beta_{jk}^{t-1})=\phi_{0}+\phi_{1}(\beta_{jk}^{t-1}-\phi_0)\quad\text{with}\quad |\phi_1|<1
\end{equation}
and
\begin{equation}\label{gammas}
\P(\gamma_{jk}^t=1|\beta_{jk}^{t-1})=\theta_{jk}^t.
\end{equation}
The conditional prior \eqref{betas} is a mixture of two components:  (i) a spike Laplace density $\psi_0(\beta|\lambda_0)$ that is concentrated around zero and (ii)
 a Gaussian slab density $\psi_1(\beta_{t}|\mu(\beta_{jk}^{t-1}),\lambda_1)$, which is moderately peaked around its mean $\mu(\beta_{jk}^{t-1})$ with variance $\lambda_1$. 
This mixture formulation is an extension of existing continuous spike-and-slab priors \citep{george1993variable,ishwaran2005spike,rovckova2018bayesian}, allowing the mean $\mu(\beta_{jk}^{t-1})$ of the non-negligible coefficients  to evolve smoothly over time (through a stationary autoregressive process).
The spike distribution $\psi_0(\beta_{t}|\lambda_0)$, on the other hand, {\sl does not} depend on $\beta_{jk}^{t-1}$, effectively shrinking the negligible coefficients towards zero.
In this regard, the conditional prior in \eqref{betas}  can be seen as a ``multiple shrinkage" prior \citep{george1986minimax,george1986combining} with two centers of gravity.

In time series data (as will be seen from our empirical study), it reasonable to expect that some factors are active only for some periods of time.  Such ``pockets of predictability" \citep{farmer2018pockets} can be captured with  spike/slab memberships $\gamma_{jk}^t$ that evolve somewhat smoothly.
This behavior can be encouraged with dynamic mixing weights $\theta_{jk}^t$  (defined in \eqref{gammas}) that  reflect past information.
To this end, we deploy the deterministic construction of \cite{rockova2018dynamic} defined, for some global balancing parameter $0<\Theta<1$,   as follows
\begin{equation}\label{weights}
\theta_{jk}^t\equiv\theta(\beta_{jk}^t)=\frac{\Theta\psi_1^{ST}\left(\beta_{jk}^t|\lambda_1,\phi_0,\phi_1\right)}{\Theta\psi_1^{ST}\left(\beta_{jk}^t|\lambda_1,\phi_0,\phi_1\right)+(1-\Theta)\psi_0\left(\beta_{jk}^t|\lambda_0\right)},
\end{equation}
given $(\Theta,\lambda_0,\lambda_1,\phi_0,\phi_1)$. This mixing weight has an interesting interpretation. It is defined as the {\sl marginal} inclusion probability  $\P(\gamma_{jk}^{t-1}=1\C\beta_{jk}^{t-1})$ for classifying $\beta_{jk}^{t-1}$ as arising from the {\sl stationary} slab distribution $\psi_1^{ST}\left(\beta_{jk}^t|\lambda_1,\phi_0,\phi_1\right)$, as opposed to the stationary spike distribution
$\psi_0\left(\beta_{jk}^t|\lambda_0\right)$, under the prior $\P(\gamma_{jk}^{t-1}=1)=\Theta$. 
As $\theta_{jk}^t$'s evolve over time,  they project the latent state (active/inactive) of the past value onto the next values.
These weights induce marginal stability in the sense that each coefficient $\beta_{jk}$ has a {\sl marginal spike-and-slab distribution}, i.e.
$\pi(\beta_{jk})=\Theta \psi_1^{ST}\left(\beta_{jk}^t|\lambda_1,\phi_0,\phi_1\right)+(1-\Theta)\psi_0\left(\beta_{jk}^t|\lambda_0\right)$ \citep[see Theorem 1 of][]{rockova2018dynamic}.

Having introduced the DSS priors, we can now fully specify our dynamic latent factor model with \eqref{eq:lf1}, \eqref{eq:lf2}, \eqref{betas}, \eqref{mu} and \eqref{gammas}. The autoregressive parameters $\phi$ and $\wt\phi$ are set fixed to values close to $1$.  
Our sparse dynamic factor model is related to the  approach of  \cite{nakajima2012dynamic}, who zero out loadings whenever their autoregressive path drops bellow a certain threshold \citep[see][for comparisons]{rockova2018dynamic}. Another related approach is by  \cite{beyeler2016factor}, who induce a point-mass spike and slab prior on the loadings. However, their approach (a) does not link the inclusion indicators and loadings over time, and (b)  MCMC is deployed for calculations. Here, we develop an EM estimation procedure which does not require strong identifiability constraints.

\subsection{Identifiability Considerations}
Factor models are not free from identifiability problems, owing to the fact that the model \eqref{eq:lf1} and \eqref{eq:lf2} is observationally equivalent to $\Y_t=\B^\star_t\bomega^\star_t+\bepsilon_t$ and $\bomega^\star_t=\bm{\Phi}\bomega^\star_{t-1}+\bm e_t$, where $\bomega^\star_t=\A_t\bomega_t$ and $\B^\star_t=\B_t\A_t'$ for any orthonormal matrix $\A_t$. To ensure identifiability, it is customary to restrict $\B_t$ to be lower-triangular, with ones on the diagonal \citep{nakajima2012dynamic,aguilar2000bayesian,lopes2004bayesian,lopes2007factor} or some variant of this form \citep{fruhwirth2010parsimonious}. Nevertheless, these constraints render the analysis ultimately dependent on the ordering of the responses. Identification restrictions are particularly important for Bayesian analysis with MCMC, where meaningful interpretation of $\B_t$ is hampered by averaging over various model orientations in the Markov Chain. Our approach, although conceptually Bayesian, {\sl does not} rely on MCMC, but instead deploys optimization for posterior mode finding. In this vein, identifiability is less of a concern and can be even taken advantage of for 
mode jumping \citep{rovckova2016fast}. We thus do not induce any strict identifiability constraints besides the requirement that each nonzero column $\B_t$ has to contain at least two nonzero entries.

\subsection{Estimating Factor Dimensionality}
The factor model \eqref{eq:lf1} and \eqref{eq:lf2} is formulated conditionally on the number of factors $K\in\N$. As noted by \cite{bai2002determining}, ``the correct specification of the number of factors is central to both the theoretical and empirical validity of factor models." The authors propose a criterion and show that it is consistent for estimating $K$ in high-dimensional setups. In another strand of research, sparsity has  been exploited for determining the effective factor dimensionality \citep{fruhwirth2010parsimonious}. In particular, Bayesian non-parametric formulations have been proposed \citep{bhattacharya2011sparse,rovckova2016fast}, where $K$ is extended to infinity, while making sure that the number of {\sl nonzero} columns in $\B_t$ is finite with probability one. Treating $K$ as random in this way under sparsity priors (such as those discussed in Section \ref{sec:priors}), the posterior output
can be used to determine $K$. We adopt a similar approach  to \cite{rovckova2016fast}, where $K$ in \eqref{eq:lf1} is purposefully over-estimated and the number of {\sl nonzero} columns obtained under strict sparsity priors will indicate how many {\sl effective} factors there are.

\section{Estimation Strategy \label{sec:comp} }
To estimate the proposed dynamic latent factor model with DSS priors, we use the EM algorithm \citep{dempster1977maximum}, which  allows for fast identification of posterior modes by iteratively maximizing the conditional expectation of the log posterior. The EM algorithm is well-suited for latent variable models, such as  factor analysis, where it has been deployed by multiple authors including \cite{rubin1982algorithms,watson1983alternative,zuur2003estimating} and, more recently, \cite{rovckova2016fast}. EM can be motivated by two simple facts. First, if we knew the missing data, standard estimation techniques
can be deployed to estimate model parameters. Second, once we update our beliefs about model parameters we can make a much better educated guess about the missing data. Iterating between these two steps provides a fast way of obtaining maximum likelihood estimates and posterior modes.

Our EM algorithm has a few extra features that make it particularly attractive for dynamic factor analysis. First, the DSS priors (with a Laplace spike at zero) create spiky posteriors with sparse modes at coordinate axes. These modes yield interpretable   latent factor structures that are anchored on sparse representations without arbitrary identifiability constraints.
Second, the number of {\sl active} factors does not have to be pre-specified and can be inferred from the dynamically evolving sparse structure.


{
\spacingset{0.8}
\begin{table}[!t]
\small
\begin{center}
\scalebox{0.8}{
\begin{tabular}{|lll|}
\hline
\multicolumn{3}{|c|}{\cellcolor[HTML]{C0C0C0} \textbf{Algorithm:} \textit{EM algorithm for Automatic Rotations to Sparsity}} \\ \hline\hline
\multicolumn{1}{|c}{}                 & \multicolumn{2}{l|}{Initialize $\BD=(\bm{B}_{0:T},\bSigma_{1:T})$}                                \\
                                      & \multicolumn{2}{l|}{Repeat the following E-Step, M-Step and Rotation step until convergence}       \\
\multicolumn{3}{|c|}{\cellcolor[HTML]{C0C0C0}The E-Step}                                                  \\
 &     & {For $t=1,\dots,T$}\\
E1:             & Latent Features:                   & Get $\bomega_{t\C T}, \V_{t\C T}$ and $\V_{t,t-1 \C T}$ from the Kalman filter and smoother                                 \\
E2:             & Latent Indicators                  & Compute $\langle \gamma_{jk}^t \rangle$ for $j=1, \dots, P$, $k=1,\dots,K $,                                             \\ 
& & $\langle \gamma_{jk}^0 \rangle  = \frac{\Theta \psi_1(\beta_{jk}^{0}|0,\frac{\lambda_1}{1-\phi^2})}{\Theta \psi_1(\beta_{jk}^{0}|0,\frac{\lambda_1}{1-\phi^2})+ (1-\Theta) \psi_0(\beta_{jk}^{0}|0,\lambda_0)}$ \\
& & $\langle \gamma_{jk}^t \rangle  = \frac{\theta_{jk}^t \psi_1(\beta_{jk}^{t}|\phi \beta_{jk}^{t-1},\lambda_1)}{\theta_{jk}^t \psi_1(\beta_{jk}^{t}|\phi \beta_{jk}^{t-1},\lambda_1)+ (1-\theta_{jk}^t) \psi_0(\beta_{jk}^{t}|0,\lambda_0)}$\\
\multicolumn{3}{|c|}{\cellcolor[HTML]{C0C0C0}The M-Step}                                                  \\
M1: &  Loadings   & {For $t=0,\dots,T$} \\
       &       & Update $\beta_{jk}^{t*}$, for $j=1, \dots, P$, $k=1,\dots,K$ following \eqref{update_betas}.                                                                \\
M2:  & Rotation Matrix    & Set $\bm A_0=\bm I_K$\\
& &  {For $t=1,\dots,T$} \\
            &                     & Update                                        $\A_t  = \bm{M}_{1t}-\bm{M}_{12t}-\bm{M}_{12t}'+\bm{M}_{2t}$, where       \\
& & $\bm{M}_{1t}=\bomega_{t-1\C T}\bomega_{t-1\C T}'+\V_{t-1 \C T}$ \\
& & $\bm{M}_{12t}=\bomega_{t-1\C T}\bomega_{t\C T}'+\V_{t,t-1 \C T}$ \\
& & $\bm{M}_{2t}=\bomega_{t \C T}\bomega_{t \C T}'+\V_{t \C T}$ \\
{M3:}             & Idiosyncratic Variance             & {Compute $\bSigma_{1:T}$ using Forward Filtering Backward Smoothing }                                              \\
\multicolumn{3}{|c|}{\cellcolor[HTML]{C0C0C0}The Rotation Step}                                           \\
R: & Rotation & {For $t=0, \dots, T$} \\
              &                            & Get Cholesky decomposition $\A_t=\A_{tL} \A_{tL}'$                                          \\
& & {Rotate $\B_t=\B_t^{*} \A_{tL}$} \\ \hline
\end{tabular}}
\end{center}
\caption{\small Parameter Expanded EM algorithm for sparse Bayesian dynamic factor analysis 
}
\label{EM}
\end{table}
}

As we discussed in Section 2.2, the model is invariant under rotation of  factor loading matrices. While this lack of identifiability has  been regarded as a setback,
it can be regarded as an opportunity. Rotational invariance creates ridge-lines in the posterior that connect posterior modes and that can guide optimization trajectories \citep{rovckova2016fast}. We follow the parameter expansion approach 
\citep[see also][]{meng1992performing,liu1999parameter} that intentionally
over-parametrizes the model and takes advantage of the lack of identification to speed up convergence.
{Similarly as \cite{rovckova2016fast}, we  work with the expanded model
\begin{align}
	\Y_t &= \B_t\A_{tL}^{-1}\bomega_t+\bepsilon_t, \quad \bepsilon_t\ind \mathcal{N}_P(\zero,\bSigma_t),\label{eq:lf1_ex}\\
	\bomega_t &=\bPhi\bomega_{t-1} +\bm\e_t, \quad \bm\e_t\ind \mathcal{N}_K(\zero,\A_t)\label{eq:lf2_ex},
\end{align}
where $\A_{tL}$ is the lower Cholesky factor of a positive semi-definite matrix $\A_t$ and $\A_t \stackrel{i.i.d}{\sim} \pi(\A)\propto 1$. {We assume the initial condition $\bm\omega_0 \sim \mathcal{N}_K(\bm 0, \sigma^2_\omega/(1-\wt{\phi}^2)\bm I_K)$}
and impose the DSS prior on the individual entries of the {\sl rotated} matrix $\B_t^\star=\B_t\A_{tL}^{-1}$. The idea is to rotate towards sparse orientations throughout the iterations of the EM algorithm. The key
observation is as follows: while matrices $\A_t$ for $1\leq t\leq T$  {\sl cannot} be identified from the observed data $\Y$, they can be identified from the complete data. {Both   $\BOmega=[\bomega_0,\dots,\bomega_T]$ and $\bG=[\bG_0,\dots,\bG_T]$ are treated as the missing data. The reduced model is obtained by setting $\A_t=\sigma^2_\omega\bm I_K$ for all $1\leq t\leq T$}.} 


{{Let us denote $\BD= (\B_0,\B_{1:T},\bSigma_{1:T})$ the model parameters.} The matrix $\B_0$ contains the initial conditions that are assumed to arise from the stationary spike-and-slab prior distribution (similarly as in \cite{rockova2018dynamic}) and  $\B_{1:T}$ denotes all matrices $\B_t$ for $1\leq t\leq T$.
The goal of the EM algorithm is to find parameter values $\wh{\BD}$, which are most likely ({\it a posteriori}) to have generated the data, i.e. $\wh{\BD}= \arg \max_{\BD} \log \pi(\BD \mid \Y)$. 
This is achieved indirectly by iteratively maximizing the expectation of   the augmented log-posterior, treating the hidden factors $\BOmega$ and $\BGamma$ as missing data. 
Starting with an initialization $\BD^{(0)}$, the $(m+1)^{st}$ step of the  EM algorithm outputs $\BD^{(m+1)}= \arg \max_{\BD} Q(\BD \C\BD^{(m)})$, where $Q(\BD \C\BD^{(m)})= \E_{\BGamma, \BOmega \mid \Y, \BD^{(m)}} [\log \pi (\BD, \BGamma, \BOmega \mid \Y)]$ with 
$\E_{\BGamma, \BOmega \mid \Y, \BD^{(m)}} (.)$ denoting the conditional expectation given the observed data and current parameter estimates at the $m^{th}$ iteration. The EM algorithm iterates between the E-step (obtaining the conditional expectation of the log-posterior) and the M-step (obtaining $\BD^{(m+1)}$). The {\sl parameter-expanded EM} works in a slightly different manner.
}

{The E-step of the {\sl parameter-expanded version} operates in the reduced space (keeping $\A_t=\sigma^2_\omega\bm{I}_K$), while the M-step operates in the expanded space (allowing for general $\A_t$).
 Namely, the E-step computes the  expectation $Q(\BD \C \BD^{(m)})$  with respect to the conditional distribution of $\BOmega$ and $\BGamma$ under the {\sl original model} anchoring on  $\B_t$ and $\A_t=\sigma^2_\omega\bm{I}_K$, rather than on  $\B_t^\star$ and unrestricted $\A_t$. }
{The M-step, on the other hand, is performed in the {\sl expanded} parameter space, {where optimization takes place over $\B^\star_{0:T}$, $\bSigma_{1:T}$, and $\A_{1:T}$. 
Updating $\B^{\star(m+1)}_{0:T}$ boils down to solving a series of independent  penalized dynamic regressions \citep[as in][]{rockova2018dynamic}. {The idiosyncratic variances $\bSigma_t=\mathrm{diag}\{\sigma_{1t}^2,\dots,\sigma_{Pt}^2\}$ for $t=1,\dots,T$} are estimated in the M-step using  Forward Filtering Backward Smoothing \ref{ffbs} \citep[Ch.~4.3.7][]{Prado2010} using  the discount SV specification (as discussed in the Supplemental Material). Since $\A_{1:T}$ can be inferred from the complete data, one can estimate these matrices in the M-step to leverage the information in the missing data.
Nevertheless, the updated  matrices $\A_{1:T}$ are  {\sl not} carried forward towards the next E-step (which uses $\A_t=\sigma^2_\omega\bm{I}_K$), but are  used to {\sl rotate} the solution $\B^{\star(m+1)}_{0:T}$ back towards the reduced space via $\B_t^{(m+1)}=\B_t^{\star(m+1)}\A_{tL}$.} See \cite{rovckova2016fast} for more explanations of parameter expansion for factor rotations.
The steps of the algorithm are carefully explained in Section \ref{sec:mstep}. The computations are summarized in Table \ref{EM}}. The convergence of the EM algorithm with parameter expansion is provably faster \citep{meng1992performing,rovckova2016fast}.


\begin{figure}[t!]
\centering
\includegraphics[width=0.8\textwidth]{./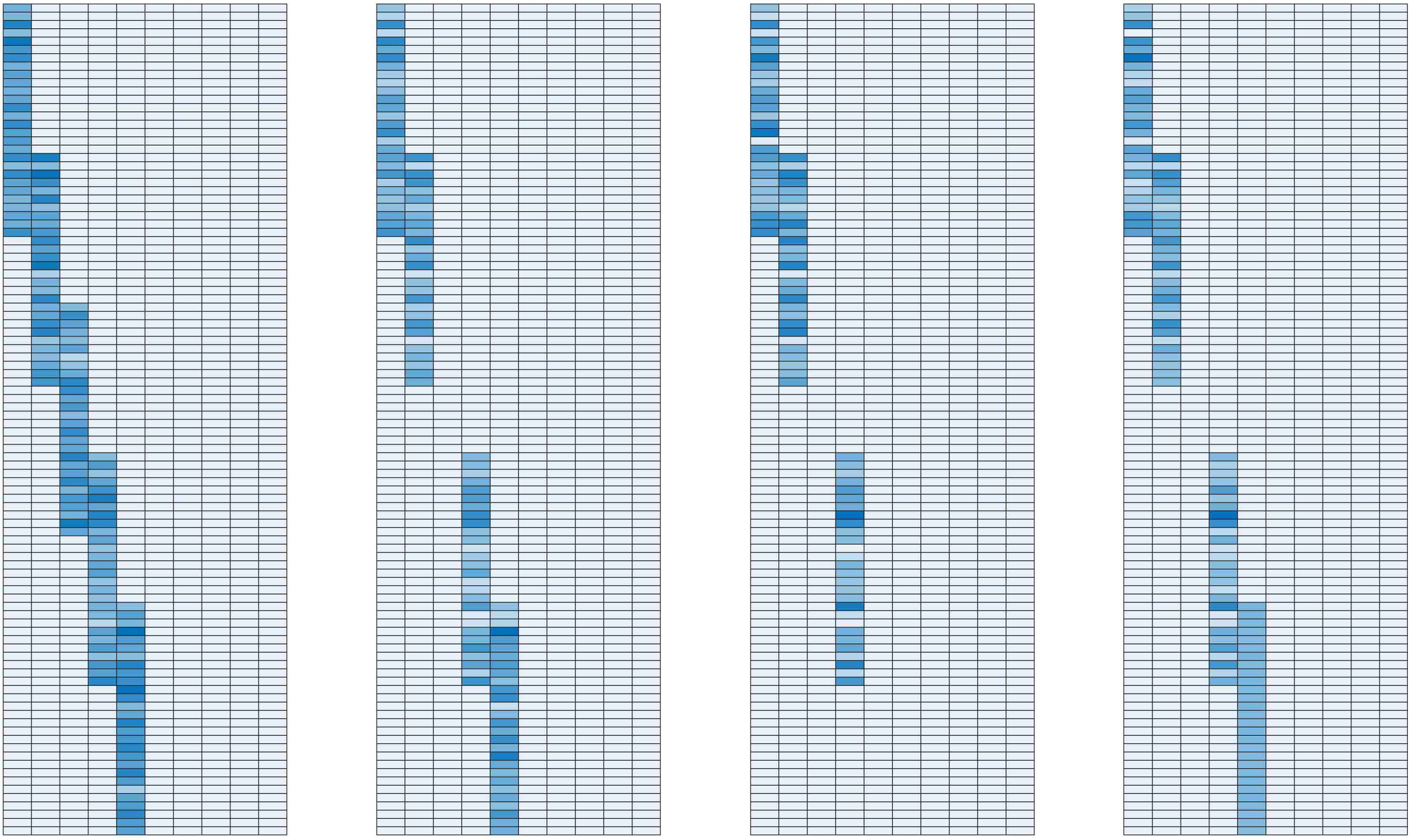}
\caption{Simulation Study:  The true latent factor loadings $\B_{t}^0$ at $t=1,101,201,301$.}
\label{fig:true}
\end{figure}

\section{Simulation Study \label{sec:simstudy}}
We illustrate the usefulness of our proposed approach, relative to multiple existing methods, on  synthetic data, reflecting the following characteristics that can occur in real  applications:  dynamic patterns of sparsity, smoothness, and  a time-varying   factor dimension. 

First, we generate a single dataset with $P=100$ responses, $K=10$ candidate latent factors, and $T=400$ time series observations (extra $100$  data points are 
generated  as training data for the rolling window analysis,  as will be described below).
The dimensionality of this example is already beyond practical limits of many Bayesian procedures.
The elements of latent factors $\BOmega_t$ and idiosyncratic errors $\bm{\epsilon}_t$ are generated from a standard Gaussian distribution.
Only the first five factors are potentially active over time, with the latter five being always inactive. We now describe the true loading matrices $\B^0=[\B^0_1,\dots,\B^0_T]$,  which were used to generate the data, where  $\B^0_t=\{\beta_{jk}^{0t}\}\in\R^{P\times K}$. At time $t=1$,  the active latent factor loadings form a block diagonal structure with $28$ active loadings per factor, of which $10$ overlap with another factor.
In other words, we have $60$ series with only one active factor, and $40$ with two active factors (see the leftmost image in Figure~\ref{fig:true}).
The sparsity pattern changes structurally over time where (a) at time $t=101$ the  loadings of the third factor become inactive, (b) at $t=201$ the loadings of the fifth factor  become inactive, and (c) at $t=301$ the loadings of the fifth factor are  re-introduced and active until $T=400$ (Figure~\ref{fig:true}).
The true nonzero loadings are smooth and arrive from an autoregressive process,  i.e.
$\beta_{jk}^{0t}=\phi\beta_{jk}^{0t-1}+v_{jk}^t$ with  $v_{jk}^t\iid \mathcal{N}(0,0.0025)$ for $\phi=0.99$, initiated at $\beta_{jk}^{01}=2$ for all $1\leq j\leq P$ and $1\leq k\leq 5$.
When loadings $\beta_{jk}^{0t}$ become inactive, they are thresholded to zero.
The true factor loadings are thereby smooth until they suddenly  drop  out and can emerge. 

\begin{figure}[htbp!]
\centering
\centering
\subfigure[$t=100$]{\includegraphics[width=12cm, height=3.5cm]{./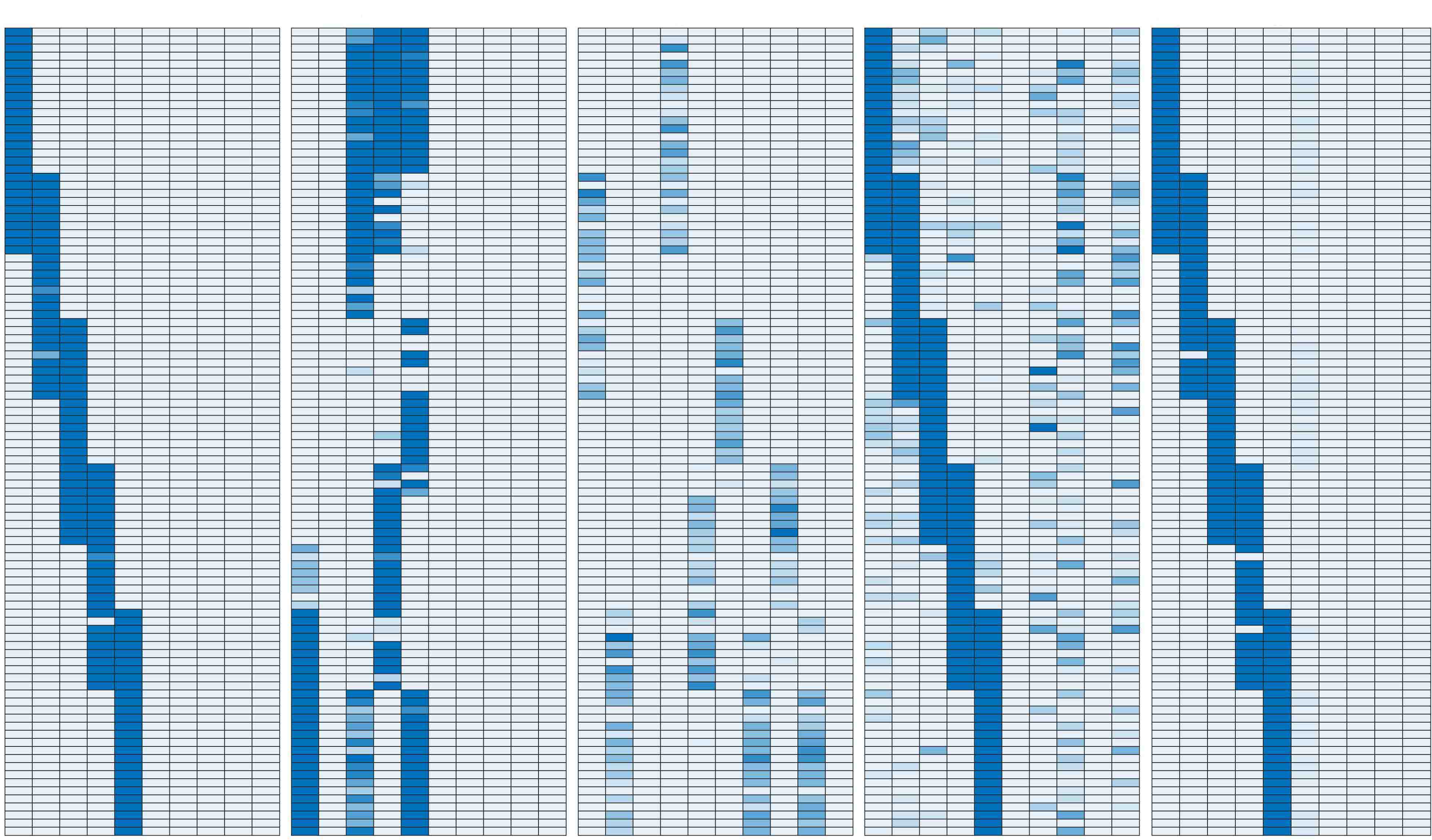}}\label{fig:t100}

\subfigure[$t=200$]{\includegraphics[width=12cm, height=3.5cm]{./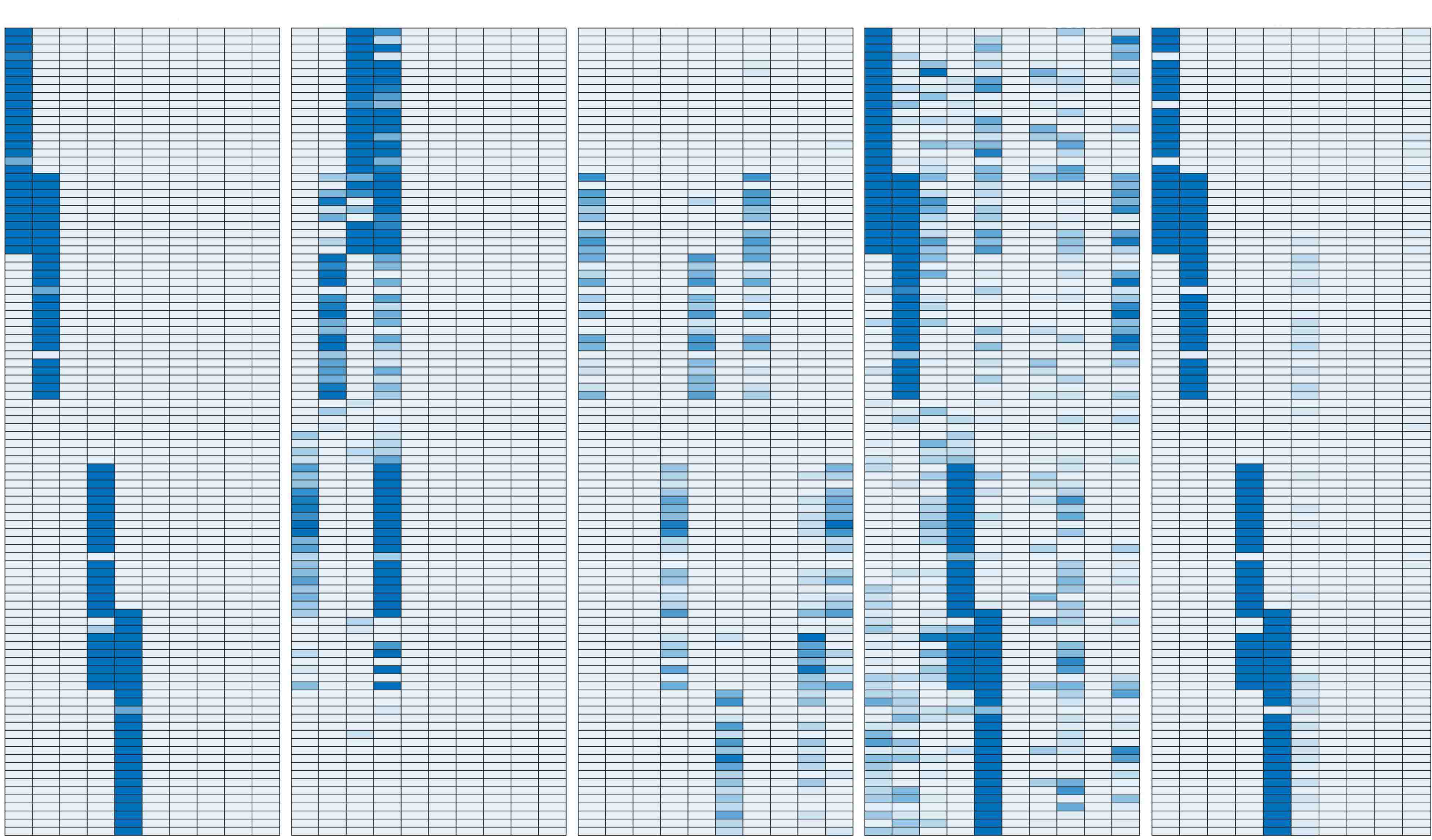}}\label{fig:t200}

\subfigure[$t=300$]{\includegraphics[width=12cm, height=3.5cm]{./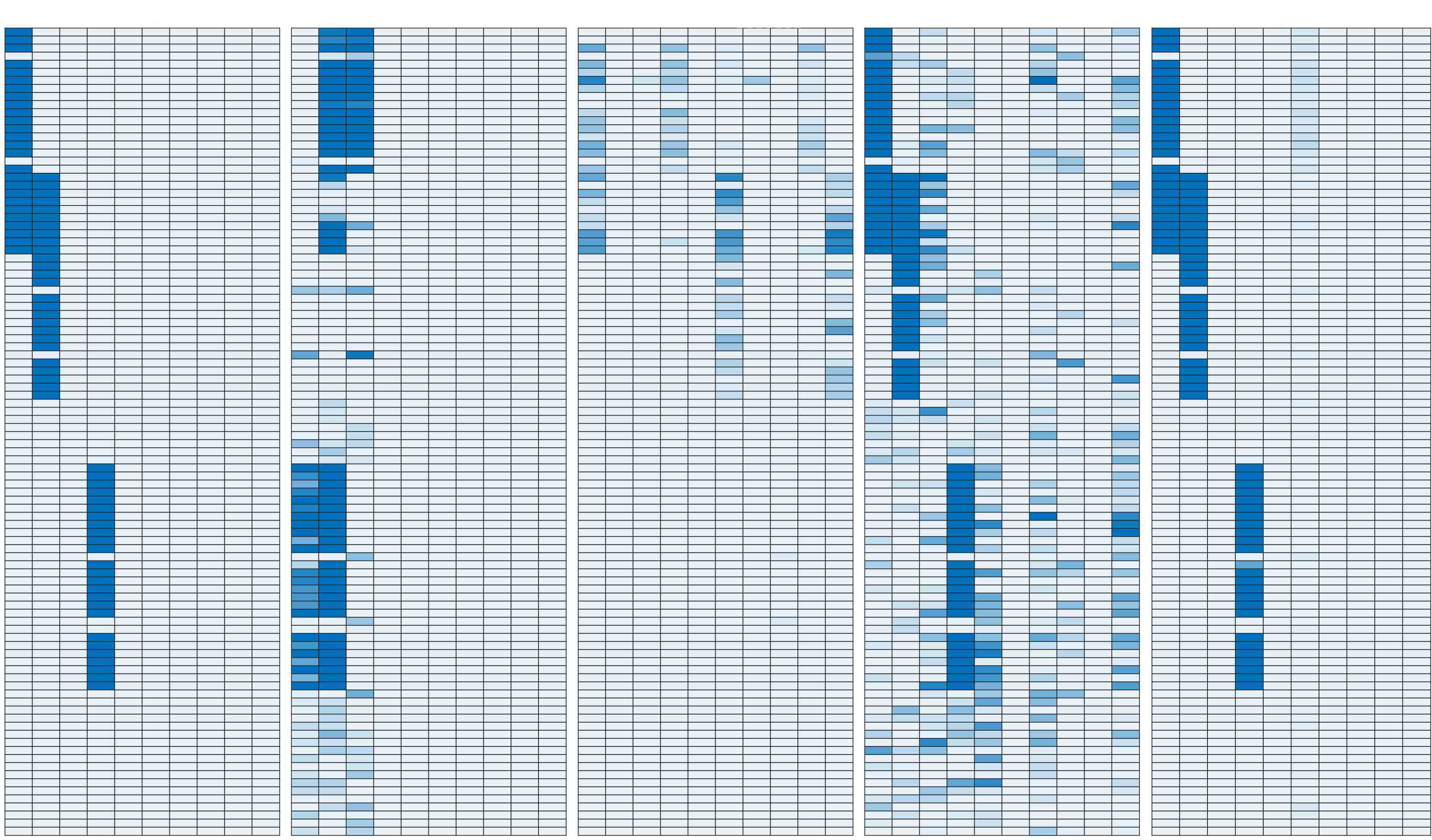}}\label{fig:t300}

\subfigure[$t=400$]{\includegraphics[width=12cm, height=3.5cm]{./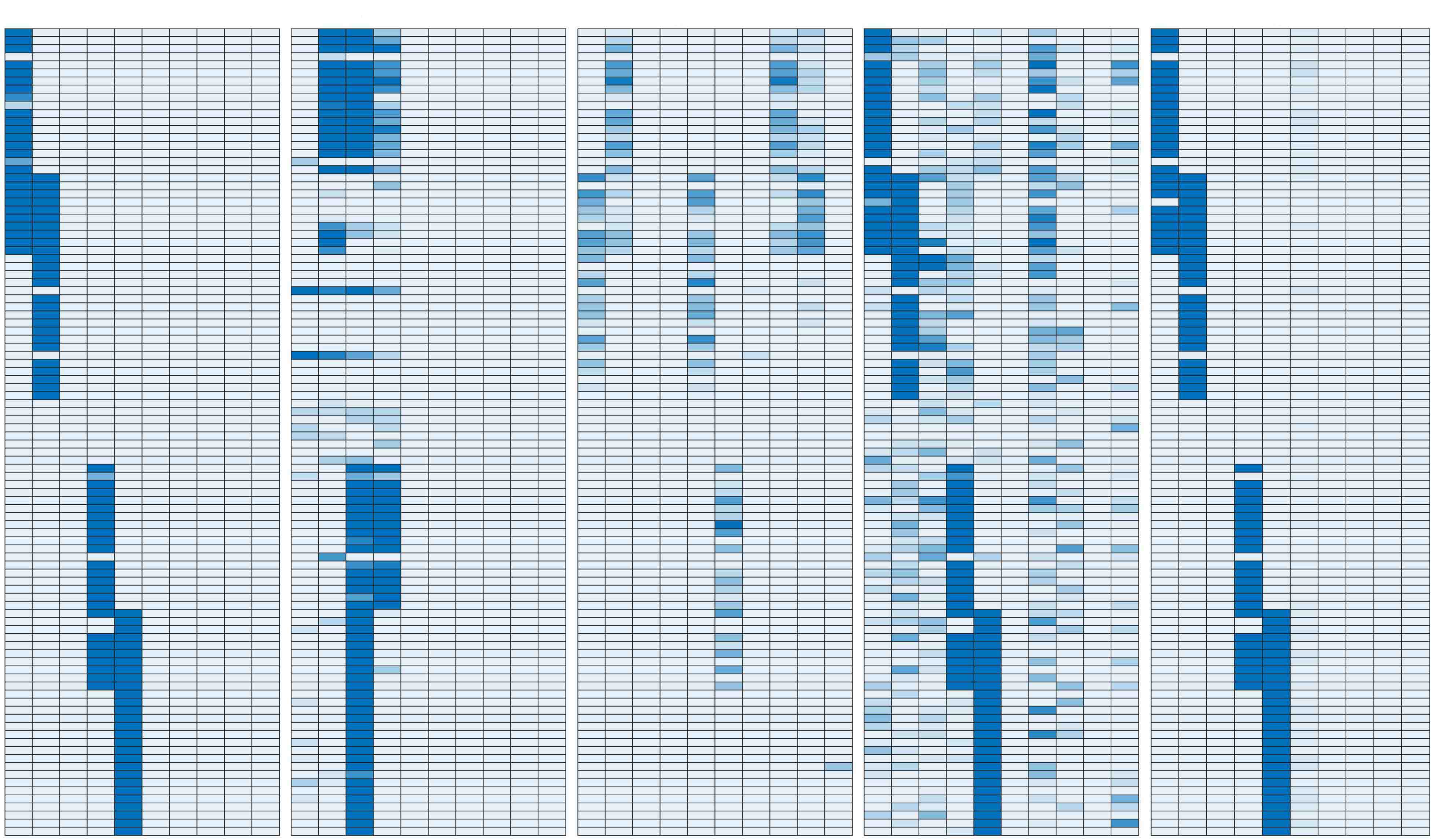}}\label{fig:t400}
\caption{Simulated Example: Heatmaps of true and estimated factor loadings at $t=\{100,200,300,400\}$. Comparisons are made between (from left to right), the true factor loadings, ``Adaptive PCA," ``Sparse PCA" ($K=10$), rolling window spike-and-slab factor analysis ($K=10$), and our dynamic spike-and-slab factor analysis. The first three methods are estimated using a rolling window of $100$ data points.  Factor loadings are absolute and capped at $0.5$ for visibility.}
\label{fig:sim}
 \end{figure}

We compare our proposed dynamic spike-and-slab  factor selection with three other approaches.
The first one is the ``rolling window" version of the static factor analysis with rotations to sparsity by \cite{rovckova2016fast} using $K=10$ (i.e. overshooting the true factor dimensionality).
We compare this approach with  ``Adaptive PCA" of  \cite{bai2002determining}, which corresponds to a rolling-window principal component analysis (PCA) with estimated number of factors, and with ``Sparse PCA"  using $K=10$, which is a rolling-window LASSO-based  regularization method with cross-validation for selecting the level of shrinkage \citep{witten2009penalized}. All these methods are estimated using a rolling window of size $100$, where we generate extra $100$ training data points using the   sparsity pattern $\B^{0}_1$. We choose $\wt\phi=0.95$ and $\sigma^2_\omega=1-\wt\phi^2$.
To deploy the dynamic spike-and-slab priors, we set $\phi_0=0$, $\phi_1=0.98$, $\lambda_0=0.9$, $\lambda_1=10(1-\phi_1^2)$, and $\Theta=0.9$ \citep[following the recommendations in ][]{rockova2018dynamic}.
To improve the performance of our  EM method, we initialize the procedure using the output from the rolling window static spike-and-slab factor model of \cite{rovckova2016fast}.

Focusing on the reconstruction of  factor loadings, we take snapshots at times $t=\{100,200,300,400\}$ and visually compare the output to the truth (Figure~\ref{fig:sim}).
We see that both spike-and-slab methods achieve good recovery. 
However, the static spike-and-slab cannot fully contain the dynamic loadings, where we see a lot of spillover to other factors. 
Dynamic spike-and-slab shrinkage, on the other hand, smooths out the sparsity over time, clearly improving on the recovery.
``Adaptive PCA" performs well, correctly specifying the number of factors. However, the factor loadings are non-sparse and  rotated. ``Sparse PCA" with $K=10$ is  fairly successful, recovering the blocking structure correctly, but splitting the signal among multiple factors
\citep[an observation made also by][]{rovckova2016fast}. 
For the spike-and-slab methods, these patterns can be alternatively obtained by thresholding conditional inclusion probabilities rather than just looking at nonzero entries in $\wh\B_{1:T}$.

\begin{figure}[t!]
\centering
\subfigure[RMSE]{\includegraphics[width=0.45\textwidth]{./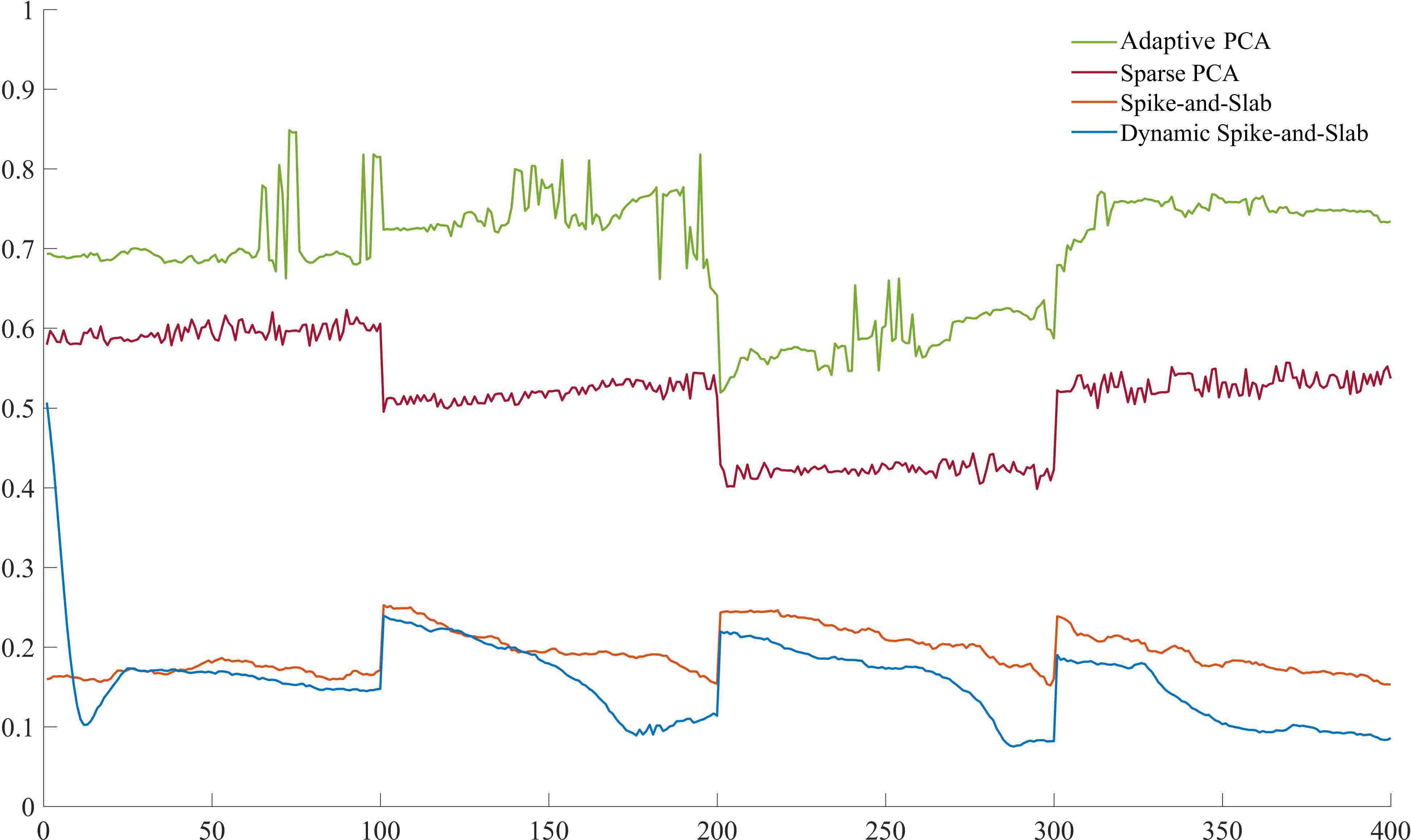}}
\subfigure[$\wh K$]{\includegraphics[width=0.45\textwidth]{./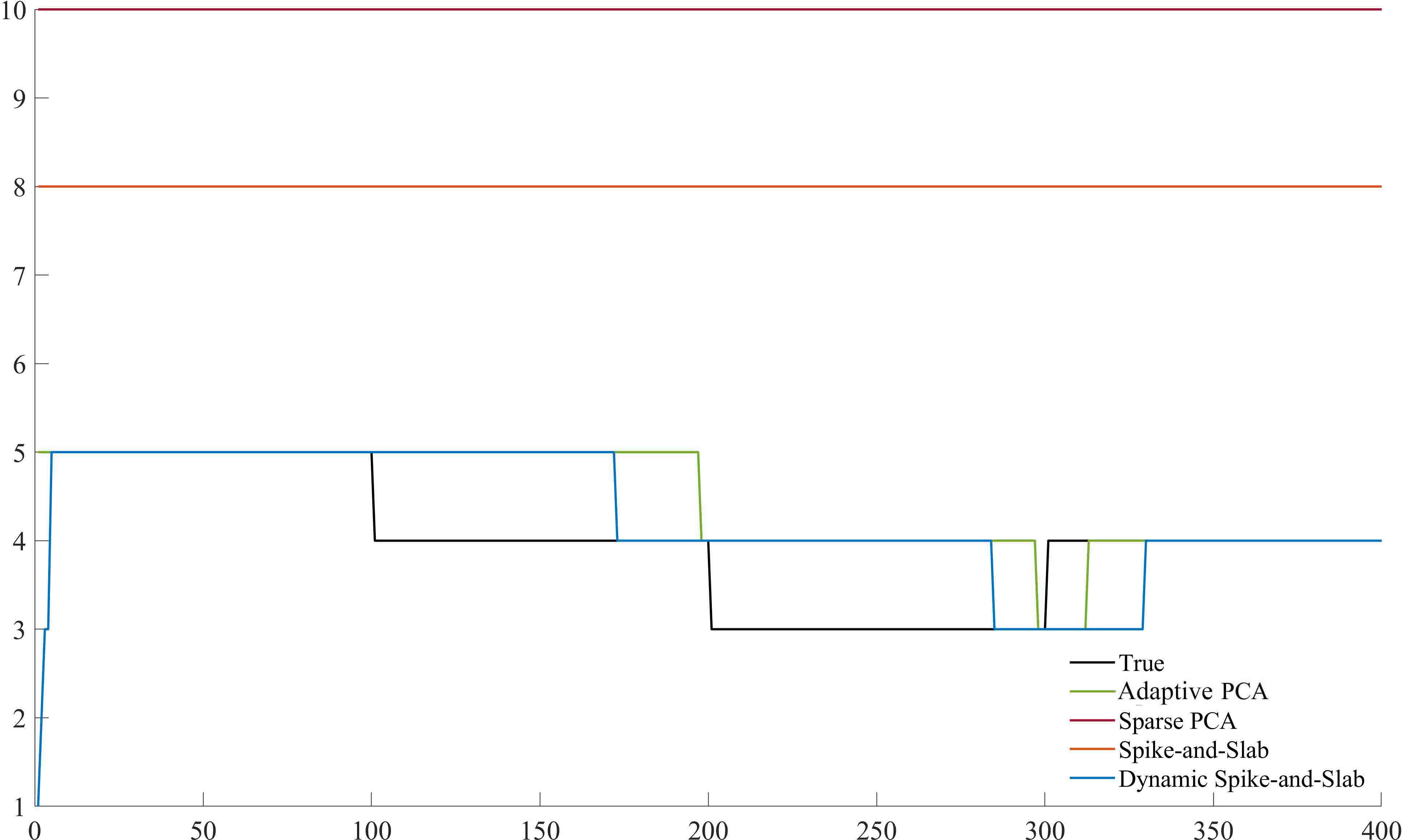}}
\caption{Simulation Study: (Left) The root mean squared error \eqref{rmse} and (Right) the estimated number of factors for ``Adaptive PCA," ``Sparse PCA," static spike-and-slab, and dynamic spike-and-slab, calculated for each $t=1{:}400$. 
\label{fig:rmse}}
\end{figure}

We further explore how the root mean squared errors (RMSE) change over time for one of the simulations (Figure \ref{fig:rmse}).
This is calculated for each $t=1:T$ by
\begin{align}
	RMSE(\wh \B_t) = \sqrt{\frac{tr(\B^0_{t}- \wh\B_t)'(\B^0_{t}- \wh\B_t)}{P\times K}},\label{rmse}
\end{align}
where $\wh\B_{t}$ are the estimated factor loadings at time $t$. Since this comparison is not entirely meaningful due to the rotational invariance, we  compute \eqref{rmse} for the left-ordered variants of these matrices. 
 By looking at the speed of decrease in RMSE after a structural change, it is clear that dynamic spike-and-slab adapts faster compared to its rolling window counterpart.
The drop of RMSE for ``Adaptive PCA" in periods $101{:}200$ and $201{:}300$ can be attributed to the fact that the number of factors was estimated correctly, resulting in  many true zero discoveries. On the other hand, the large estimation error of ``Sparse PCA" is due to the lack of sparsity and scattered structure of the factors. 

Additionally, we plot the estimated number of factors for each method and compare it to the true number of factors.
``Sparse PCA"  overestimates the number of factors (where we regard a factor as active if it has at least one nonzero loading). This indicates that unstructured sparsity is not enough. 
Looking at ``Adaptive PCA" and our dynamic spike-and-slab factor model, we find that both perform similarly well in terms of estimating the number of factors. 
Furthermore, we note that dynamic spike-and-slab adapts faster to factors disappearing, while ``Adaptive PCA" adapts faster to factors reappearing.

We repeat the experiment 10 times and report the average RMSE over each of the four stationary interim time periods in Table~\ref{tab:sim}.
Dynamic spike-and-slab achieves  good recovery, improving upon the rolling window spike-and-slab by as much as 8\% to 34\% (except for the first period). Large recovery errors of the ``Sparse PCA"  method can be explained by factor splitting. While ``Adaptive PCA" does recover the correct number of factors at each snapshot, the loadings are non-sparse, rotated and non-smooth over time.

\begin{table}[t!]
\centering
\scalebox{0.7}{
\begin{tabular}{lrrrrrrrrrrrr}
\hline\hline
                       & \multicolumn{3}{c}{t=1:100}                      & \multicolumn{3}{c}{t=101:200}                    & \multicolumn{3}{c}{t=201:300}                    & \multicolumn{3}{c}{t=301:400}                    \\
                       & \multicolumn{1}{r}{RMSE} & \multicolumn{1}{r}{\%}& \multicolumn{1}{r}{$\wh K$} & \multicolumn{1}{r}{RMSE} & \multicolumn{1}{r}{\%}& \multicolumn{1}{r}{$\wh K$} & \multicolumn{1}{r}{RMSE} & \multicolumn{1}{r}{\%} & \multicolumn{1}{r}{$\wh K$}& \multicolumn{1}{r}{RMSE} & \multicolumn{1}{r}{\%}& \multicolumn{1}{r}{$\wh K$} \\ \hline
Adaptive PCA            & 1.0660             & -266.07     &      5    & 1.0590                  & -400.24               &4.97& 0.9730                   & -250.38     &     3.97     & 1.033                  & -430.01   &   3.88         \\
Sparse PCA             & 0.7862                   & -169.99        &  10      & 0.7260                   & -242.94 &       10       & 0.6377                   & -129.64      &    10      & 0.7383                  & -278.81       &     10   \\
Spike-and-Slab         & 0.1919                   & 34.10          &      8  & 0.2843                    & -34.29                 &8& 0.2988                    & -7.60           &   8   & 0.2447                    & -25.60      &    8       \\
Dynamic Spike-and-Slab & 0.2912                   & -             &   4.89      & 0.2117                   & -                      &4.72& 0.2777                    & -            &   3.84       & 0.1949                  & -              &     3.71   \\ \hline\hline
\end{tabular}}
\caption{\small Simulation Study:  Performance evaluation of the latent factor methods compared to the true coefficients for $t=1{:}400$. Performance is evaluated based on RMSE within each evaluation period. \% is the performance gain compared to dynamic spike-and-slab. $\wh K$ is the average number of factors estimated during that period.
\label{tab:sim}}
\end{table}


\section{Empirical Study \label{sec:study} }
The empirical application concerns a large-scale monthly U.S. macroeconomic  database,  
comprising a balanced panel of $P=127$ monthly macroeconomic and financial variables tracked over the period of $2001/01$ to $2015/12$ ($T=180$).
These variables are classified into eight main categories, depending on their economic meaning: {\sl Output and Income, Labor Market, Consumption and Orders, Orders and Inventories, Money and Credit, Interest Rate and Exchange Rates, Prices}, and {\sl Stock Market}.  
A detailed description of how variables  were  collected and constructed is provided in \cite{mccracken2016fred}. 
A quick table of names and groups of each variable is in  the Appendix (Table \ref{tab:macrovar}).
The variables were centered to have mean zero and standardized following the procedures in \cite{mccracken2016fred}.

The purpose of conducting  a sparse latent factor analysis on a large-scale economic dataset, such as this one, is at least twofold. 
Due to the group structure of the data, it is natural to assume that the measured indicators are tied via a few  latent factors, the basic premise of latent factor modeling.
Moreover, 
we expect the sparse latent structure to pickup clusters of dependence structures that capture the interconnectivity of indicators spanning many {\sl different} aspects of the economy.
Sparsity will help extract such interpretable structures.
Second, given the dynamic nature of the economy, there is a substantial interest  in understanding  how these dependencies change over time and-- in particular-- how they are affected by  shocks.
We anticipate non-negligible shifts in the economy, as the data spans over  the housing bubble deflation after 2006 and the great financial crisis in late 2008, which led to the Great Recession.
Understanding the interplay between  contributing factors to the financial crisis has been a subject of rigorous research \citep[see for example,][]{financial2011financial,reinhart20082007,mian2009consequences,mian2011house,mian2013household,chodorow2014effects,benmelech2017real}. 
Our analysis is purely data-driven and thereby descriptive rather than causally conclusive. We attempt to characterize patterns of shock proliferation and permanence of structural changes  of the economy using our dynamic factor model.

As the dataset is considerably richer than our simulated example, we expand the  model \eqref{eq:lf1} by incorporating a dynamic intercept to capture location shifts   that could not be easily standardized away.
 {The intercepts $c_{jt}$ follow independent  random walk evolutions with an initial condition $c_0\sim N(0,1)$. The  initial condition for the SV variances is  $1/\sigma_{j0}^2\ind G(n_0/2,d_0/2)$ for $1\leq j\leq P$ with $n_0=20$ and $d_0=0.002$}. The discount factor is set to 0.95.




\begin{figure}[t!]
     \subfigure[``Adaptive PCA"]{
   \begin{minipage}[b]{0.3\textwidth}
       \centering \includegraphics[width=3cm, height=5cm]{./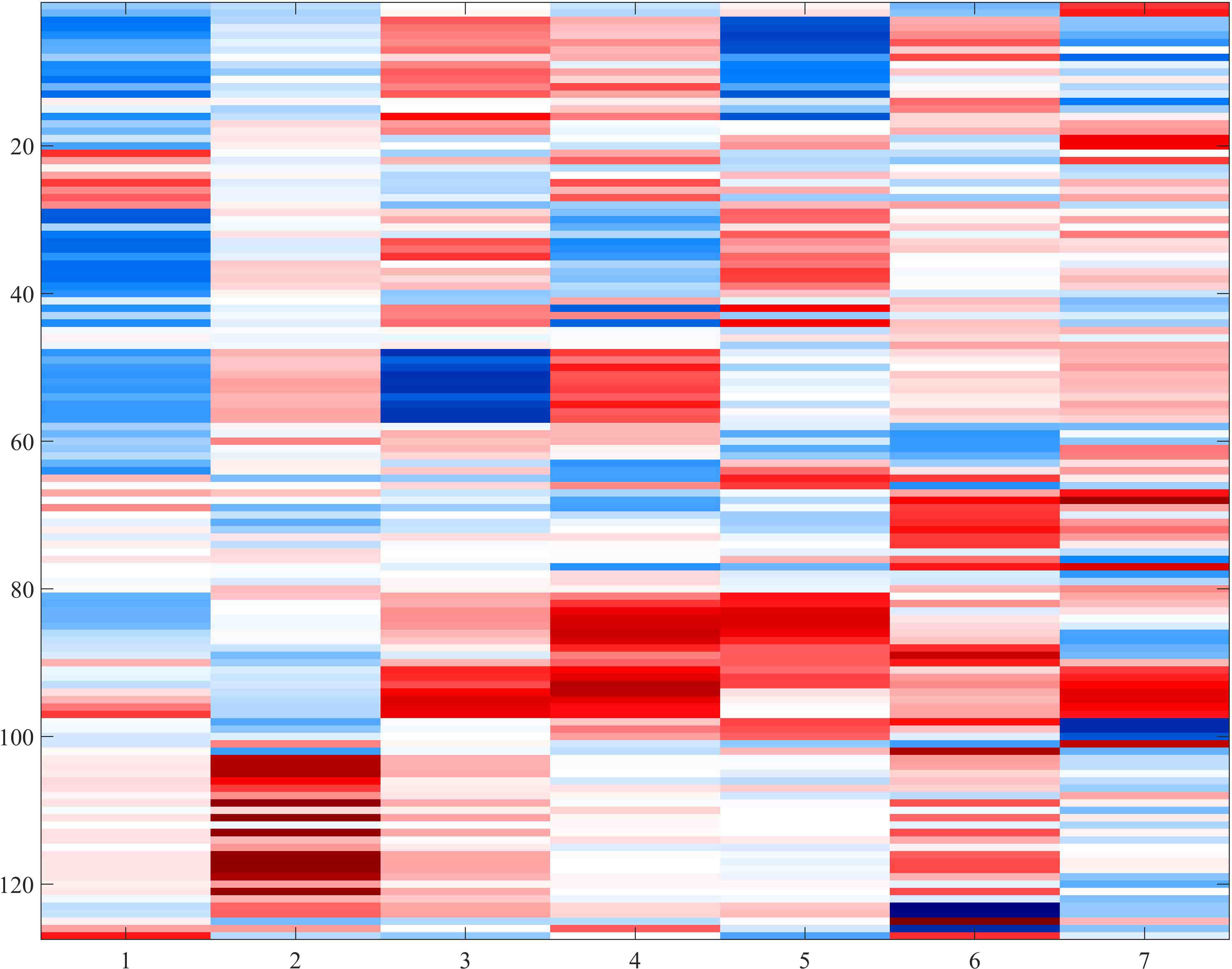}
    \end{minipage}}
     \subfigure["Sparse PCA"]{
    \label{fig1:dyna2}
    \begin{minipage}[b]{0.3\textwidth}
       \centering\includegraphics[width=3cm, height=5cm]{./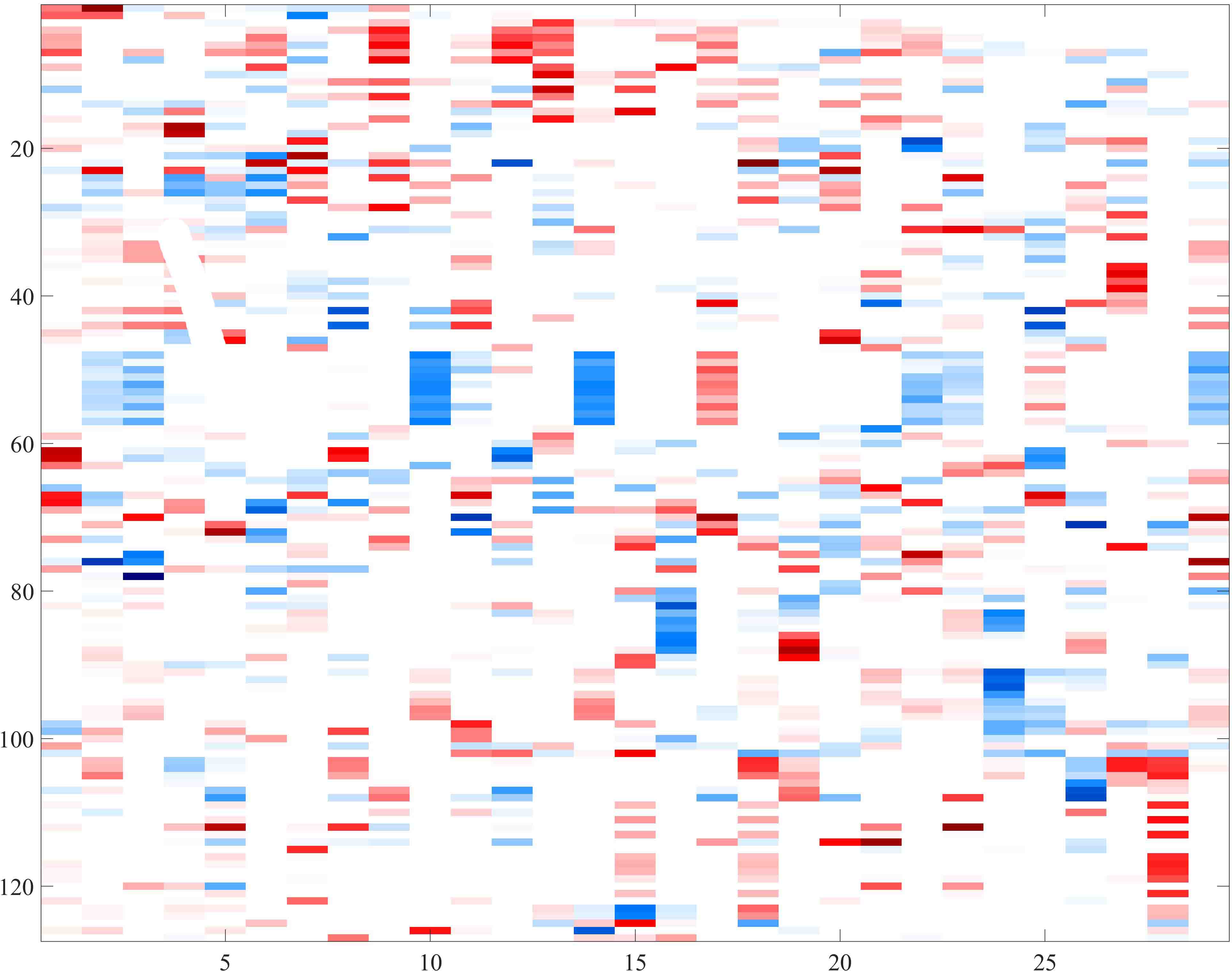}
    \end{minipage}}
    \subfigure["Sparse PCA"]{
    \label{fig1:dyna2}
    \begin{minipage}[b]{0.3\textwidth}
       \centering\includegraphics[width=3cm, height=5cm]{./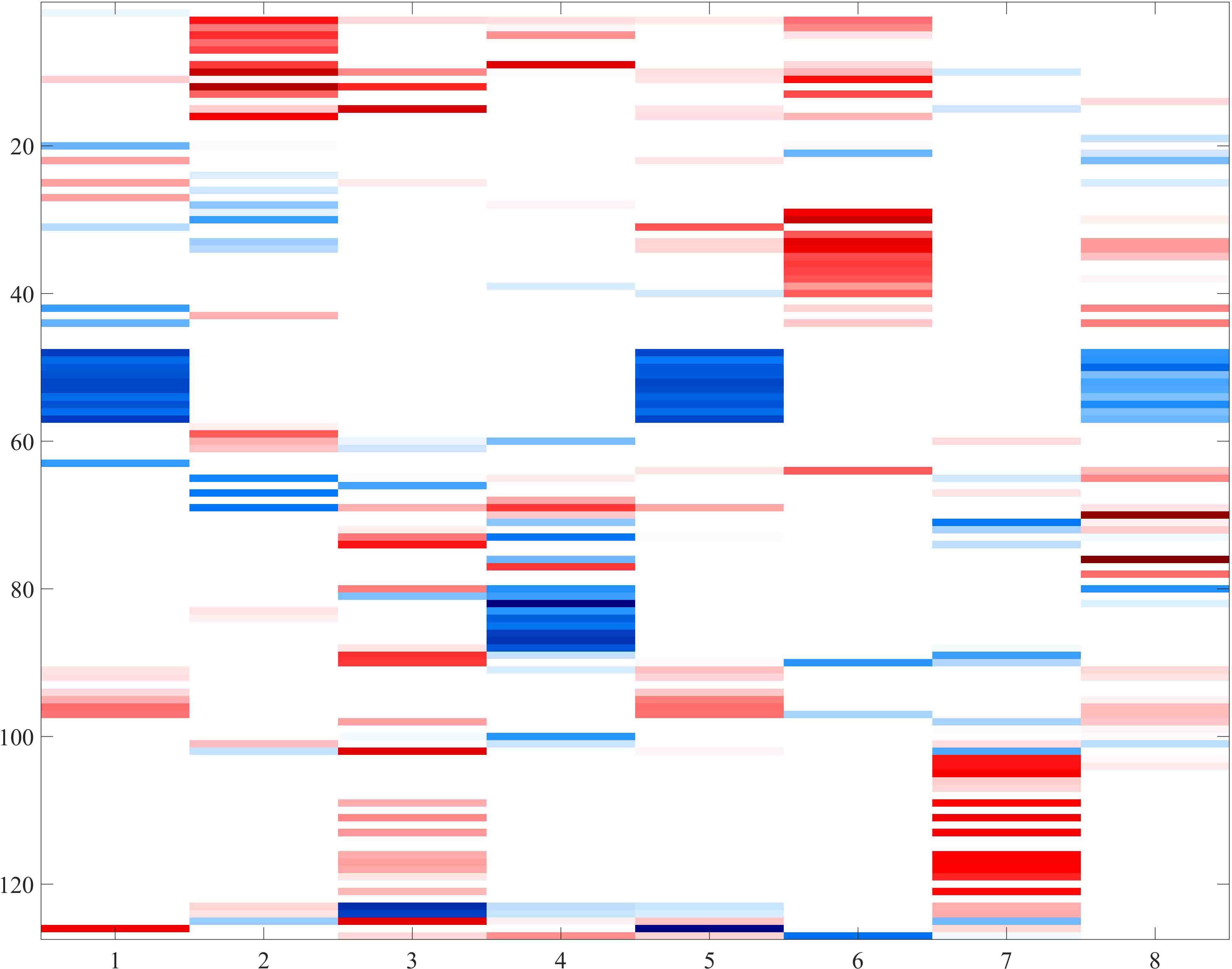}
    \end{minipage}}
\caption{Macroeconomic Study:  Estimated factor loadings using ``Adaptive PCA" (Left), ``Sparse PCA" with number of factors set as 30 (Middle), and ``Sparse PCA" with number of factors set to 8 from the results of ``Adaptive PCA"  (Right) at $t=2015/12$, with the number of series on the y-axis and the number of factors in the x-axis. The factor loading are estimated using a $10$ year rolling window.
\label{fig:PCA1512}}
\end{figure}


First, we examine one snapshot of the output from ``Adaptive PCA" and ``Sparse PCA" (described in Section \ref{sec:simstudy}) at time $2015/12$ (Figures~\ref{fig:PCA1512}).
Both  methods do pick up  certain groupings, but do not yield interpretable enough representations. This  is likely due to overestimation of the number of factors (Figure \ref{fig:PCA1512} (b)), factor rotation and lack of sparsity (Figure \ref{fig:PCA1512} (a)) and/or factor splitting (Figure \ref{fig:PCA1512} (c)). 
Next, we deploy the rolling window  spike-and-slab factor method with a training period {of 10 years} to obtain starting values for our dynamic factor model.
Priors and their hyper-parameters were  chosen as in  the simulation study.  We choose a generous upper bound $K=126$ on the number of factors, letting the sparsity rule out factors that are irrelevant.

We now examine   the output of our procedure at three time points: 2003/12, 2008/10, and 2015/12.
These three snapshots are of particular interest as they represent three distinct  states of the economy: relative stability (2003), sharp economic crisis (2008), and recovery (2015).
2008/10 is at the onset of the great financial crisis, where deflation of the housing bubble after 2006 lead to mortgage delinquencies and financial fragility \citep{financial2011financial}.
This distress permeated throughout the rest of the economy, including the labor market,  leading to the deepest recession in post-war history. 

\begin{figure}[t!]
\centering
\includegraphics[width=0.8\textwidth]{./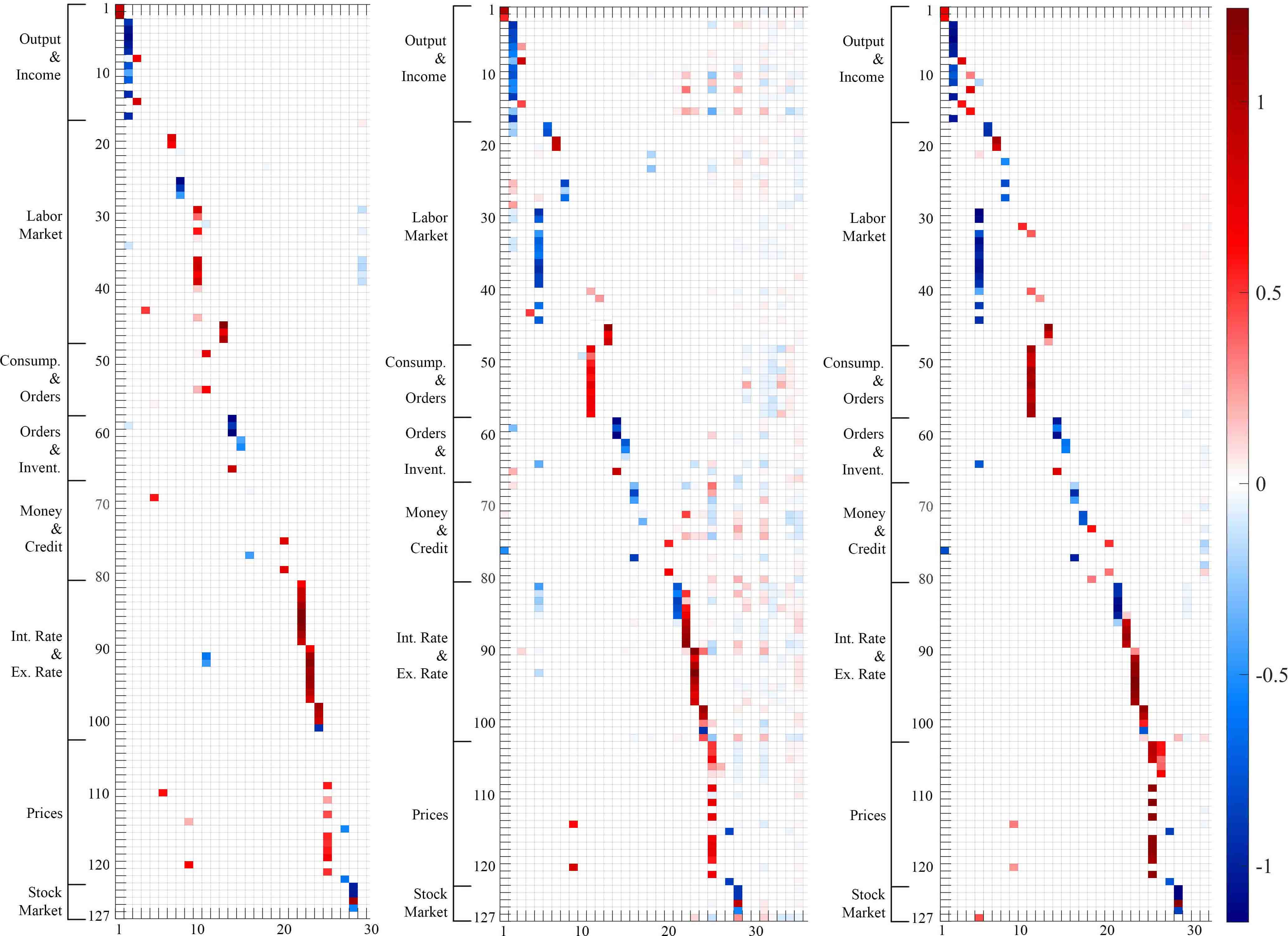}
\caption{Macroeconomic Study:  Estimated factor loadings using dynamic sparse factor analysis at $t=2003/12$ (left), $t=2008/10$ (center), $t=2015/12$ (right), with the original series on the y-axis and the  factors in the x-axis.  The factor loading are estimated dynamically over the period $2001/1{:}2015/12$.}\label{fig:macroDSScol}
\end{figure}

The heatmap of estimated factor loadings at time $2003/12$ is in Figure~\ref{fig:macroDSScol} (left). The output has been left-ordered based on the results at $2015/12$, where the more active factors are on the left, in the order of data series, and some of the less active right-most factors  (with small or zero loadings) are omitted.
There are $24$ active factors in total (i.e. factors with at least two non-negligible non-zero factor loadings), with only $5$ factors that cluster eight or more series (Factors 2, 10, 22, 23, and 25).
Since the variables are grouped by their economic meaning, this type of clustering is not entirely unexpected. 
For example, Factor 2 includes CMRMTSPLx (real manufacturing and trade industry sales), all industrial production indices except nondurable materials, residential utilities, and fuels,  CUMFNS (capacity utilization), DMANEMP (durable goods employment), and ISRATIOx (manufacturing and trade inventories to sales ratio). 
This factor could be interpreted as a factor for {\sl durable goods}, which include industries that are more susceptible to economic trends, where sales, inventories, industrial production, capacity utilization, and employment are all connected.
Conversely, we expect nondurable goods, such as utilities and fuels, to have a different dynamic than durable goods, which is reflected in the exclusion of those indices in Factor 2.
Similarly, Factor 10 includes employment data (except for mining and logging, manufacturing, durable goods, nondurable goods, and government), Factor 22 includes interests rates (fed funds rate, treasury bills, and bond yields), Factor 23 includes the spread between interest rates minus fed funds rate, and Factor 25 includes consumer price indices except apparel, medical care, durables, and services, as well as personal consumptions expenditures on nondurable goods.
All of these factors produce meaningful and mostly separated clusters  that largely conform with economic intuition.

During the crisis (Figures~\ref{fig:macroDSScol}; center), radical changes occur in the factor structure.  
Concerning Factor 2, the dependence structure expands, now spanning over nondurables and fuels, as well as  HWI (the help wanted index), UNEMP15OV (unemployment for 15 weeks and over), CLAIMSx (unemployment insurance claims), and  PAYEMS (employment, total non-farm, goods-producing, manufacturing, and durable goods).
This indicates that the shock might have affected relatively stable industries and unemployment, with the co-movement across industries being  largely synchronized under distress (with the exception of residential utilities). 
Another interesting observation is the emergence of new factors.
In particular, Factor 11, which includes housing starts and new housing permits in different regions in the U.S., was {\sl not} present pre-crisis and now surfaces as a connecting thread between housing markets across regions.
While in $2003/12$ the latent factors were largely separated (loadings had little overlap), we now see  at least two factors (namely Factor 25 and 28),
  whose loadings are non-sparse and far-reaching. 
In particular, Factor 28 emerges as a non-sparse link between many different sectors of the economy,
 including retail sales, industrial production, employment (in particular financial services), real M2 money stock, loans, BAA bond yields (but not AAA), exchange rates, consumer sentiment, investment and, most importantly, the stock market indices, including the S\&P 500 and the VIX (i.e. the fear index). 
Factor 25, on the other hand, is driven mainly by prices (e.g. CPI). Both of these factors 
    could be potentially interpreted as crisis factors
as they are connected to the various corners of the economy, except  Consumption and Orders; the housing market.
The ``orthogonality" between the housing market factor (Factor 11) and the ``crisis factors" (Factor 25 and 28) may suggest that, while the crisis was triggered by the housing market, the main catalyst of the recession was the financial market. While our analysis does not necessarily prove this hypothesis, it aligns with previous lines of reasoning. In particular, there have been  arguments that the devaluation of securities, including mortgage backed securities, ultimately led to curtailed lending and decreased investment and  consumption \citep{chodorow2014effects,benmelech2017real}.


Finally, Figure~\ref{fig:macroDSScol} (right)  shows the end of the analysis at $2015/12$, where the economy has mostly recovered from the Great Recession, but has fundamentally changed from what it was before.
Although most of the factor overlap has dissipated, we see a notably different structure compared to 2003.
In particular, Factor 5 (employment) and Factor 11 (housing) persevere from the crisis. Moreover, the ``crisis factors" Factor 25 and 28, representing the  prices and the stock market,
are no longer strongly tied to other parts of the economy (labor, output, interest and exchange rates, etc.).
Factor 2 is one of the few factors that have returned back to its original structure, except for CMRMTSPLx and industrial production of nondurable consumer goods.
Its dependence with the labor market (e.g. unemployment) has  disappeared, suggesting that industry production is no longer in co-movement with the labor market.

\begin{figure}[t!]
     \subfigure{
   \begin{minipage}[b]{0.5\textwidth}
       \centering \includegraphics[width=1\textwidth]{./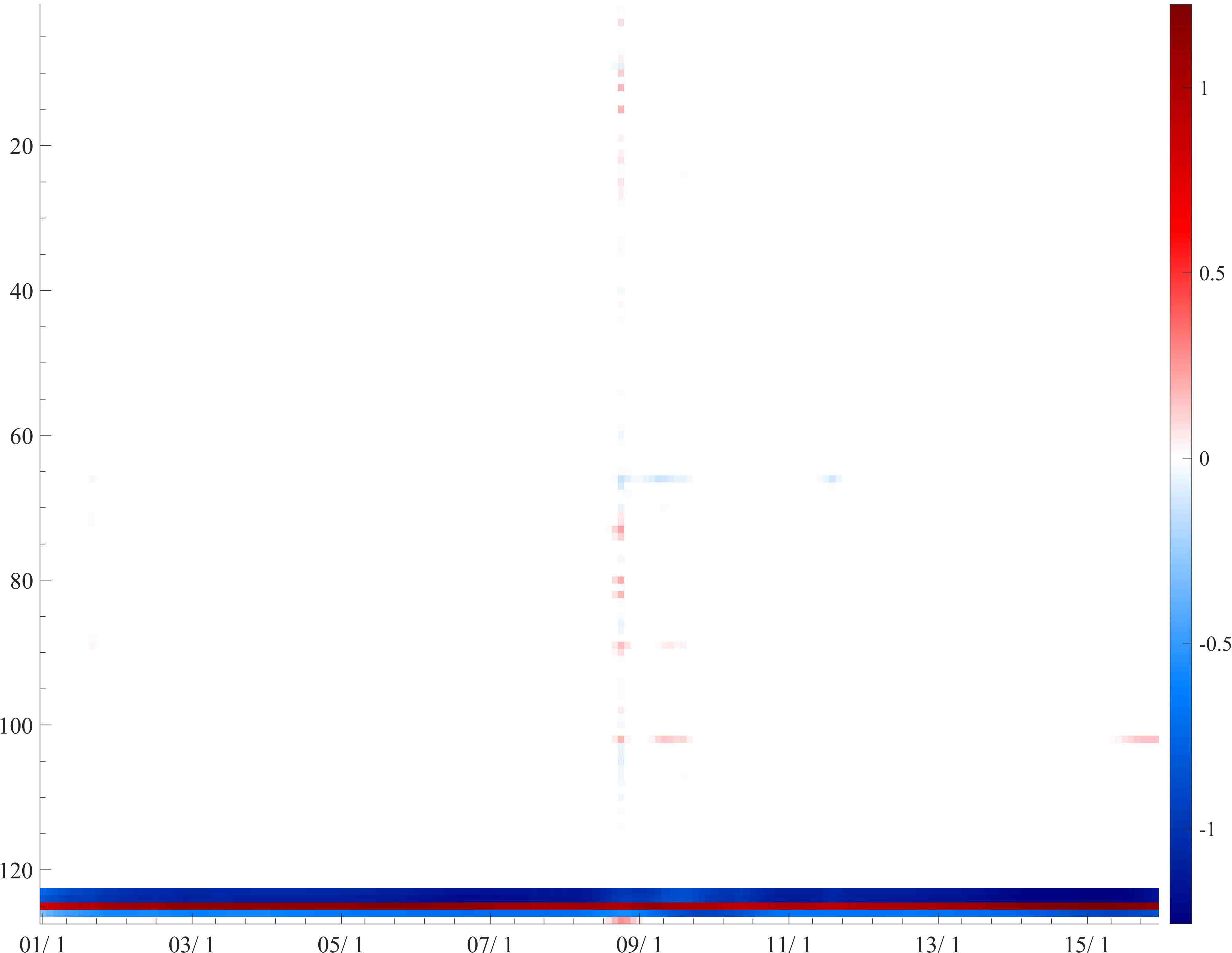}
    \end{minipage}}
     \subfigure{
    \label{fig1:dyna2}
    \begin{minipage}[b]{0.5\textwidth}
       \centering\includegraphics[width=1\textwidth]{./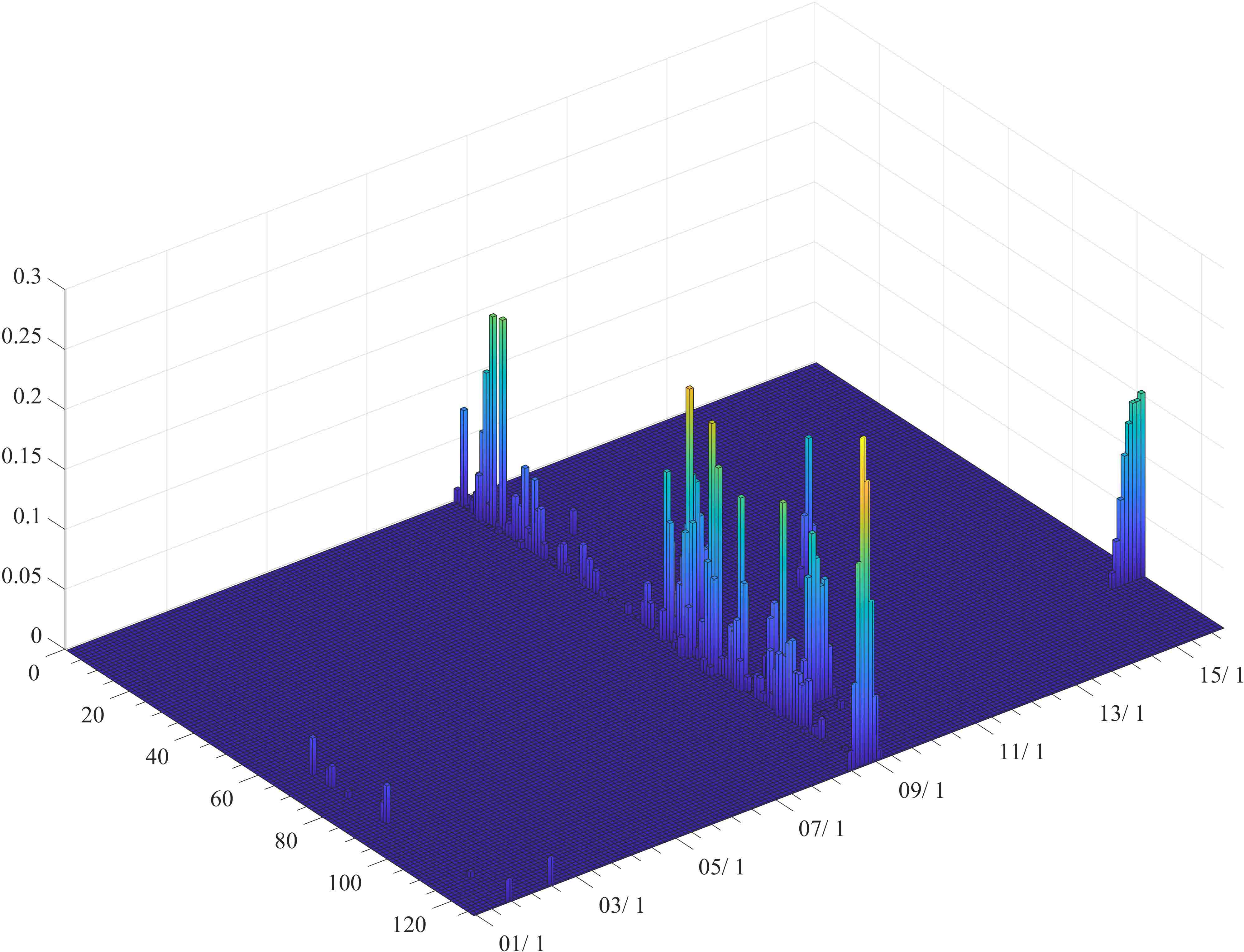}
    \end{minipage}}
\caption{Macroeconomic Study:  Estimated factor loadings for Factor 28 using dynamic spike-and-slab from $t=200	/12{:}2015/12$, with a heatmap of the entire factor loadings (Left) and a 3-D plot of the factor loadings with the loadings on 123-126 (S\&P related indices) set to zero to increase visibility.}
\label{fig:28loading}
\end{figure}

We also obtain  insights into the effects and  duration of the crisis by looking at the evolution of the factor loadings for one of the ``crisis" factors, Factor 28. Figure~\ref{fig:28loading} shows a dynamic heatmap and a $3$-D plot of $\beta_{jk}^t$  for $1\leq j\leq 127$ (y-axis) and $1\leq t\leq 180$ (x-axis) with $k=28$.
For the $3$-D plot, the loadings on the S\&P indices are suppressed to zero in order to improve visibility.
The figure reveals a spur of activity around the sharp financial crisis (late 2008 and early 2009), where the contagion  battered  multiple corners of the economy.
The duration  of the active loadings provide additional insights.
For example, the loadings on VIX {(series 127)} emerges and disappears in a eight month span {from 06/2008 to 02/2009}, while the loadings on the exchange rate between U.S. and Canada lasts for $17$ months. However, most factor loadings seem to only emerge for about 4-6 months.

To understand the degree of connectivity/overlap between factors, we plot the average number of active factors per series over time (Figure~\ref{fig:n_fac}).
More overlap indicates a more intertwined economy.   We observe an increase in late $2008$, reflecting the emergence pervasive crisis factor(s), {\color{black} as well as its build up from mid-2006.
Another point to note is that the level pre-crisis is comparatively lower than post-crisis, indicating a structural shift is the economy brought on by the crisis.}

\begin{figure}[t!]
\centering
\includegraphics[width=0.8\textwidth]{./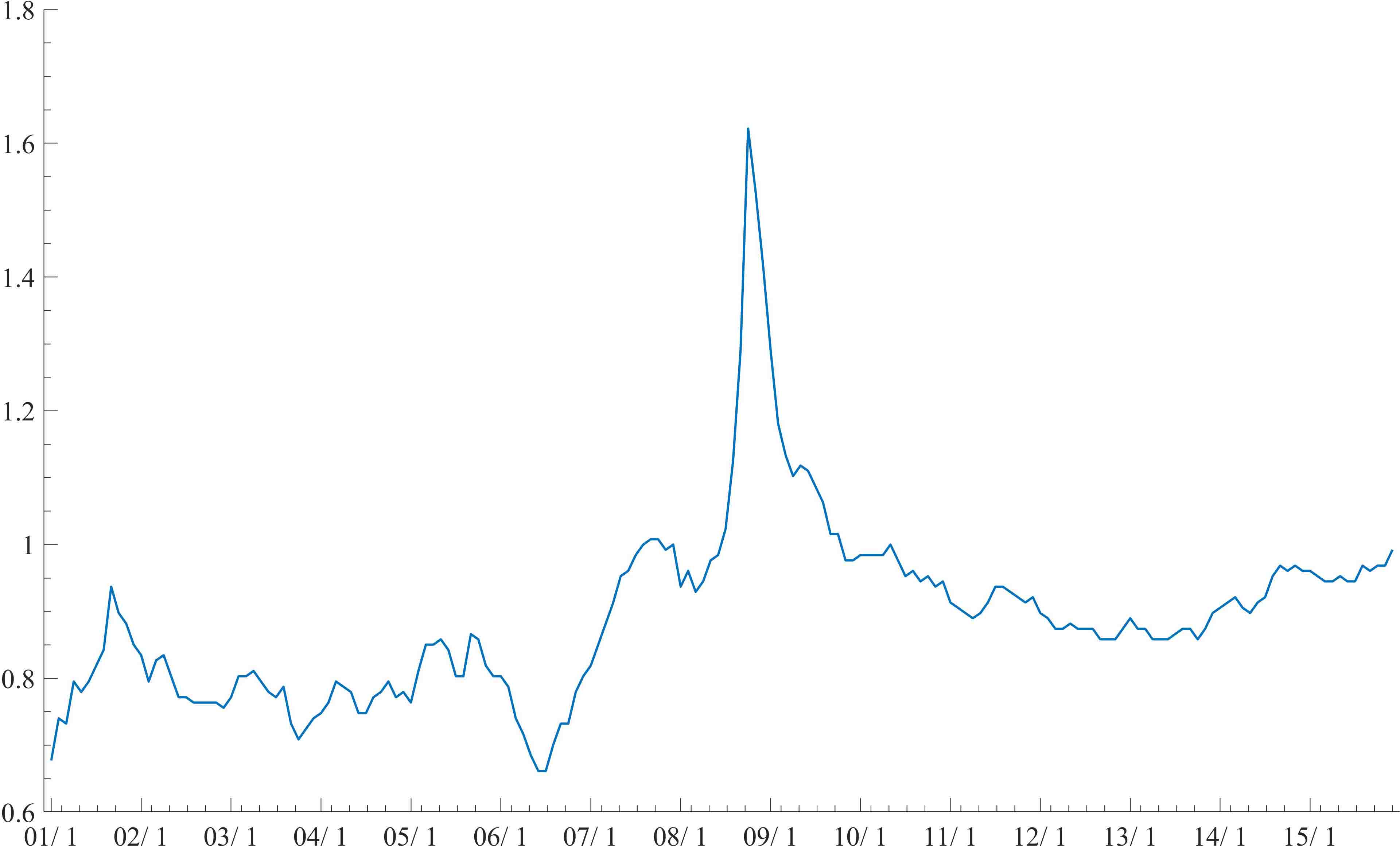}
\caption{Macroeconomic Study:  The average number of estimated active factors  (with absolute loadings above $0.1$) per series over the period $2001/1{:}2015/12$. 
\label{fig:n_fac}}
\end{figure}

We further our analysis with a few insights into the idiosyncratic variances for  variables related to the housing market:
 HOUST (total housing starts) and its regional variants (North East, Mid-West, South, and West).
Housing starts is  the seasonally adjusted number of new residential construction projects that have begun during any particular month and, as such,  is a key part of the U.S. economy, which relates to employment  and many industry sectors including banking (the mortgage sector), raw materials production, construction, manufacturing, and real estate.
In our earlier analysis (Figure~\ref{fig:macroDSScol}) we found that, while regional indicators were not clustered  pre-crisis, persistent clustering occurs post-crisis.
Figure~\ref{fig:idrisk} portrays the series of residual uncertainties $\{\sigma^2_{jt}:1\leq t\leq T\}$ for each  regional housing starts indicator.
We find several interesting patterns. Figure~\ref{fig:idrisk} indicates that increased uncertainty in housing starts  is a global phenomenon but that there is heterogeneity across regions as to the magnitude and timing. 
{\color{black} For example, we find that the West region to react the earliest, followed by Mid-West and South.
North-East is somewhat of an exception, as the idiosyncratic variance starts out greater than the other series, falling off pre-crisis, increasing during the crisis, and tapering off to a level similar to the other regions.
The speed of mounting uncertainty could be associated with the deflation of the housing bubble after 2006 \citep{financial2011financial}.
As the economy recovers from the Great Recession, we observe a gradual decrease in uncertainty, where different regions recover at different paces.}

\begin{figure}[t!]
\centering
\includegraphics[width=0.8\textwidth]{./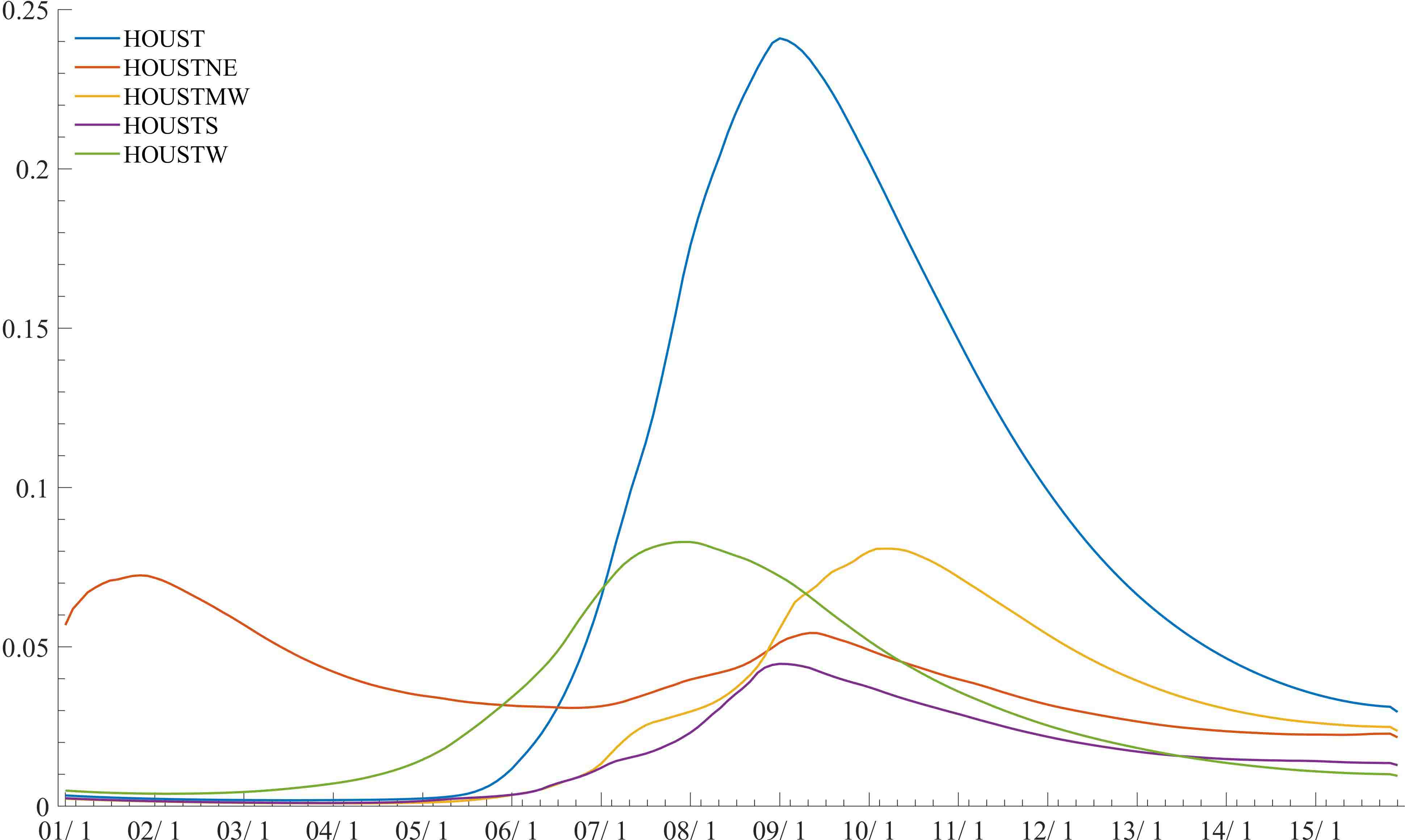}
\caption{Macroeconomic Study:  The idiosyncratic variance, $\bSigma_t$, of U.S. housing starts, over the period $2001/1{:}2015/12$. 
\label{fig:idrisk}}
\end{figure}

\section{Further Comments \label{sec:conclusion} }
Motivated by a topical macroeconomic dataset, we developed  a Bayesian method for dynamic sparse factor analysis for large-scale time series data.
Our proposed methodology aims to tackle three challenges of dynamic factor analysis: time-varying patterns of sparsity, unknown number of factors, and identifiability constraints. By deploying dynamic sparsity, we successfully recover interpretable latent structures that automatically select the number of factors and that incorporate time-varying loadings/factors. We successfully applied our  methodology on a nontrivial simulated example as well as a real dataset comprising of 127 U.S. macroeconomic indices tracked over the period of the Great Recession (and beyond) and obtained several interpretable findings.

Our methodology can be enriched/extended in many ways.
One possible extension would be to develop a latent variable method that can capture  within, as well as  between, connectivity of several high-dimensional time series.
This could be achieved with  a dynamic extension of sparse canonical correlation analysis \citep{witten2009penalized}.
Our method can also be embedded within FAVAR models \citep{bernanke2005measuring} that include both observed and unobserved  predictors.
{Additionally, throughout our analysis we have assumed the covariance of the latent factors to be fixed over time and equal to an identity matrix, one could in principle incorporate dynamic variances with stochastic volatility modeling. }

One possible shortcoming of our proposed methodology, which is shared by all EM based estimation strategies, is the lack of uncertainty assessment, which is essential for forecasting.
The EM algorithm, however, was the key to obtaining  interpretable latent structures.
To achieve both,  one could impose  identification constraints, such as \cite{nakajima2012dynamic,nakajima2013bayesian},
and perform MCMC for DSS priors along the lines of  \cite{rockova2018dynamic}.
Another approach would be to apply our method  simply as a means of  obtaining  identifiability constraints (i.e. the sparsity pattern) and then reestimate the nonzero loadings with an MCMC strategy.
While this would not quantify any sparsity-selection uncertainty, it  would be an effective way to balance interpretability and forecasting/decision making.
Another unavoidable feature of our method is its sensitivity to starting values. We strongly recommend using the output from the rolling window spike-and-slab factor model. 


\newpage
\bibliographystyle{elsarticle-harv}
\bibliography{References}

\begin{thebibliography}{}

\bibitem[\protect\citeauthoryear{Aguilar, Huerta, Prado, and West}{Aguilar
  et~al.}{1998}]{aguilar1998bayesian}
Aguilar, O., G.~Huerta, R.~Prado, and M.~West (1998).
\newblock Bayesian inference on latent structure in time series.
\newblock {\em Bayesian Statistics\/}~{\em 6\/}(1), 1--16.

\bibitem[\protect\citeauthoryear{Aguilar and West}{Aguilar and
  West}{2000}]{aguilar2000bayesian}
Aguilar, O. and M.~West (2000).
\newblock Bayesian dynamic factor models and portfolio allocation.
\newblock {\em Journal of Business \& Economic Statistics\/}~{\em 18\/}(3),
  338--357.

\bibitem[\protect\citeauthoryear{Bai and Ng}{Bai and
  Ng}{2002}]{bai2002determining}
Bai, J. and S.~Ng (2002).
\newblock Determining the number of factors in approximate factor models.
\newblock {\em Econometrica\/}~{\em 70\/}(1), 191--221.

\bibitem[\protect\citeauthoryear{Bai and Ng}{Bai and
  Ng}{2013}]{bai2013principal}
Bai, J. and S.~Ng (2013).
\newblock Principal components estimation and identification of static factors.
\newblock {\em Journal of Econometrics\/}~{\em 176\/}(1), 18--29.

\bibitem[\protect\citeauthoryear{Baumeister, Liu, and Mumtaz}{Baumeister
  et~al.}{}]{baumeister2010changes}
Baumeister, C., P.~Liu, and H.~Mumtaz.
\newblock Changes in the transmission of monetary policy: {Evidence} from a
  time-varying factor-augmented {VAR}.
\newblock {\em Working Paper No. 401, Bank of England\/}.

\bibitem[\protect\citeauthoryear{Benmelech, Meisenzahl, and
  Ramcharan}{Benmelech et~al.}{2017}]{benmelech2017real}
Benmelech, E., R.~R. Meisenzahl, and R.~Ramcharan (2017).
\newblock The real effects of liquidity during the financial crisis: {Evidence}
  from automobiles.
\newblock {\em The Quarterly Journal of Economics\/}~{\em 132\/}(1), 317--365.

\bibitem[\protect\citeauthoryear{Bernanke, Boivin, and Eliasz}{Bernanke
  et~al.}{2005}]{bernanke2005measuring}
Bernanke, B.~S., J.~Boivin, and P.~Eliasz (2005).
\newblock Measuring the effects of monetary policy: a factor-augmented vector
  autoregressive ({FAVAR}) approach.
\newblock {\em The Quarterly journal of Economics\/}~{\em 120\/}(1), 387--422.

\bibitem[\protect\citeauthoryear{Beyeler and Kaufmann}{Beyeler and
  Kaufmann}{2016}]{beyeler2016factor}
Beyeler, S. and S.~Kaufmann (2016).
\newblock Factor augmented {VAR} revisited: {A} sparse dynamic factor model
  approach.
\newblock Technical report, Working Paper, Study Center Gerzensee.

\bibitem[\protect\citeauthoryear{Bhattacharya and Dunson}{Bhattacharya and
  Dunson}{2011}]{bhattacharya2011sparse}
Bhattacharya, A. and D.~B. Dunson (2011).
\newblock Sparse {Bayesian} infinite factor models.
\newblock {\em Biometrika\/}, 291--306.

\bibitem[\protect\citeauthoryear{Burns and Mitchell}{Burns and
  Mitchell}{1947}]{burns1947measuring}
Burns, A.~F. and W.~C. Mitchell (1947).
\newblock {\em Measuring {Business} {Cycles}}.
\newblock The National Bureau of Economic Research.

\bibitem[\protect\citeauthoryear{Carvalho, Chang, Lucas, Nevins, Wang, and
  West}{Carvalho et~al.}{2008}]{carvalho2008high}
Carvalho, C.~M., J.~Chang, J.~E. Lucas, J.~R. Nevins, Q.~Wang, and M.~West
  (2008).
\newblock High-dimensional sparse factor modeling: applications in gene
  expression genomics.
\newblock {\em Journal of the American Statistical Association\/}~{\em
  103\/}(484), 1438--1456.

\bibitem[\protect\citeauthoryear{Carvalho, Lopes, and Aguilar}{Carvalho
  et~al.}{2011}]{carvalho2011dynamic}
Carvalho, C.~M., H.~F. Lopes, and O.~Aguilar (2011).
\newblock Dynamic stock selection strategies: {A} structured factor model
  framework.
\newblock {\em Bayesian Statistics\/}~{\em 9}, 1--21.

\bibitem[\protect\citeauthoryear{Cheng, Liao, and Schorfheide}{Cheng
  et~al.}{2016}]{cheng2016shrinkage}
Cheng, X., Z.~Liao, and F.~Schorfheide (2016).
\newblock Shrinkage estimation of high-dimensional factor models with
  structural instabilities.
\newblock {\em The Review of Economic Studies\/}~{\em 83\/}(4), 1511--1543.

\bibitem[\protect\citeauthoryear{Chodorow-Reich}{Chodorow-Reich}{2014}]{chodorow2014effects}
Chodorow-Reich, G. (2014).
\newblock Effects of unconventional monetary policy on financial institutions.
\newblock Technical report, National Bureau of Economic Research.

\bibitem[\protect\citeauthoryear{Commission}{Commission}{2011}]{financial2011financial}
Commission, F. C.~I. (2011).
\newblock {\em The financial crisis inquiry report, authorized edition: {Final}
  report of the {National} {Commission} on the {Causes} of the {Financial} and
  {Economic Crisis} in the {United States}}.
\newblock Public Affairs.

\bibitem[\protect\citeauthoryear{Del~Negro and Otrok}{Del~Negro and
  Otrok}{2008}]{del2008dynamic}
Del~Negro, M. and C.~Otrok (2008).
\newblock Dynamic factor models with time-varying parameters: measuring changes
  in international business cycles.
\newblock {\em FRB of New York Staff Report No.326\/}.

\bibitem[\protect\citeauthoryear{Dempster, Laird, and Rubin}{Dempster
  et~al.}{1977}]{dempster1977maximum}
Dempster, A.~P., N.~M. Laird, and D.~B. Rubin (1977).
\newblock Maximum likelihood from incomplete data via the {EM} algorithm.
\newblock {\em Journal of the Royal Statistical Society. Series B
  (methodological)\/}, 1--38.

\bibitem[\protect\citeauthoryear{Diebold and Nerlove}{Diebold and
  Nerlove}{1989}]{diebold1989dynamics}
Diebold, F.~X. and M.~Nerlove (1989).
\newblock The dynamics of exchange rate volatility: a multivariate latent
  factor {ARCH} model.
\newblock {\em Journal of Applied Econometrics\/}~{\em 4\/}(1), 1--21.

\bibitem[\protect\citeauthoryear{Farmer, Schmidt, and Timmermann}{Farmer
  et~al.}{2018}]{farmer2018pockets}
Farmer, L., L.~Schmidt, and A.~Timmermann (2018).
\newblock Pockets of predictability.
\newblock {\em CEPR Discussion Paper No.DP12885\/}.

\bibitem[\protect\citeauthoryear{Fruehwirth-Schnatter and
  Lopes}{Fruehwirth-Schnatter and Lopes}{2018}]{fruehwirth2018sparse}
Fruehwirth-Schnatter, S. and H.~F. Lopes (2018).
\newblock Sparse {Bayesian Factor Analysis} when the number of factors is
  unknown.
\newblock {\em arXiv:1804.04231\/}.

\bibitem[\protect\citeauthoryear{Fruhwirth-Schnatter and
  Lopes}{Fruhwirth-Schnatter and Lopes}{2009}]{fruhwirth2010parsimonious}
Fruhwirth-Schnatter, S. and H.~Lopes (2009).
\newblock Parsimonious {Bayesian} factor analysis when the number of factors is
  unknown.
\newblock Technical report, University of Chicago Booth School of Business.

\bibitem[\protect\citeauthoryear{Fr{\"u}hwirth-Schnatter and
  Wagner}{Fr{\"u}hwirth-Schnatter and Wagner}{2010}]{fruhwirth2010stochastic}
Fr{\"u}hwirth-Schnatter, S. and H.~Wagner (2010).
\newblock Stochastic model specification search for {Gaussian} and partial
  {non-Gaussian} state space models.
\newblock {\em Journal of Econometrics\/}~{\em 154\/}(1), 85--100.

\bibitem[\protect\citeauthoryear{George}{George}{1986a}]{george1986combining}
George, E.~I. (1986a).
\newblock Combining minimax shrinkage estimators.
\newblock {\em Journal of the American Statistical Association\/}~{\em
  81\/}(394), 437--445.

\bibitem[\protect\citeauthoryear{George}{George}{1986b}]{george1986minimax}
George, E.~I. (1986b).
\newblock Minimax multiple shrinkage estimation.
\newblock {\em The Annals of Statistics\/}, 188--205.

\bibitem[\protect\citeauthoryear{George and McCulloch}{George and
  McCulloch}{1993}]{george1993variable}
George, E.~I. and R.~E. McCulloch (1993).
\newblock Variable selection via {Gibbs} sampling.
\newblock {\em Journal of the American Statistical Association\/}~{\em
  88\/}(423), 881--889.

\bibitem[\protect\citeauthoryear{Geweke}{Geweke}{1977}]{geweke1977}
Geweke, J. (1977).
\newblock The {Dynamic Factor Analysis} of {Economic Time Series}.
\newblock {\em In: Aigner, D.J. and Goldberger, A.S., Eds., Latent Variables in
  Socio-Economic Models 1, North-Holland, Amsterdam.\/}.

\bibitem[\protect\citeauthoryear{Hallin and Liska}{Hallin and
  Liska}{2007}]{hallin2007determining}
Hallin, M. and R.~Liska (2007).
\newblock Determining the number of factors in the general dynamic factor
  model.
\newblock {\em Journal of the American Statistical Association\/}~{\em
  102\/}(478), 603--617.

\bibitem[\protect\citeauthoryear{Hamilton}{Hamilton}{1989}]{hamilton1989new}
Hamilton, J.~D. (1989).
\newblock A new approach to the economic analysis of nonstationary time series
  and the business cycle.
\newblock {\em Econometrica: Journal of the Econometric Society\/}, 357--384.

\bibitem[\protect\citeauthoryear{Ishwaran, Rao, et~al.}{Ishwaran
  et~al.}{2005}]{ishwaran2005spike}
Ishwaran, H., J.~S. Rao, et~al. (2005).
\newblock Spike and slab variable selection: frequentist and {Bayesian}
  strategies.
\newblock {\em The Annals of Statistics\/}~{\em 33\/}(2), 730--773.

\bibitem[\protect\citeauthoryear{Kaufmann and Schumacher}{Kaufmann and
  Schumacher}{2017}]{kaufmann2017identifying}
Kaufmann, S. and C.~Schumacher (2017).
\newblock Identifying relevant and irrelevant variables in sparse factor
  models.
\newblock {\em Journal of Applied Econometrics\/}~{\em 32\/}(6), 1123--1144.

\bibitem[\protect\citeauthoryear{Koop, Korobilis, et~al.}{Koop
  et~al.}{2010}]{koop2010bayesian}
Koop, G., D.~Korobilis, et~al. (2010).
\newblock Bayesian multivariate time series methods for empirical
  {Macroeconomics}.
\newblock {\em Foundations and Trends{\textregistered} in Econometrics\/}~{\em
  3\/}(4), 267--358.

\bibitem[\protect\citeauthoryear{Liu, Rubin, and Wu}{Liu
  et~al.}{1998}]{meng1992performing}
Liu, C., D.~B. Rubin, and Y.~N. Wu (1998).
\newblock Parameter expansion to accelerate {EM}: The {PX-EM} algorithm.
\newblock {\em Biometrika\/}~{\em 85\/}(4), 755--770.

\bibitem[\protect\citeauthoryear{Liu and Wu}{Liu and
  Wu}{1999}]{liu1999parameter}
Liu, J.~S. and Y.~N. Wu (1999).
\newblock Parameter expansion for data augmentation.
\newblock {\em Journal of the American Statistical Association\/}~{\em
  94\/}(448), 1264--1274.

\bibitem[\protect\citeauthoryear{Lopes and Carvalho}{Lopes and
  Carvalho}{2007}]{lopes2007factor}
Lopes, H.~F. and C.~M. Carvalho (2007).
\newblock Factor stochastic volatility with time varying loadings and {Markov}
  switching regimes.
\newblock {\em Journal of Statistical Planning and Inference\/}~{\em
  137\/}(10), 3082--3091.

\bibitem[\protect\citeauthoryear{Lopes, McCulloch, and Tsay}{Lopes
  et~al.}{2010}]{lopes2010cholesky}
Lopes, H.~F., R.~McCulloch, and R.~Tsay (2010).
\newblock Cholesky stochastic volatility.
\newblock {\em \textit{Unpublished Technical Report}, University of Chicago,
  Booth Business School\/}~{\em 2}.

\bibitem[\protect\citeauthoryear{Lopes and West}{Lopes and
  West}{2004}]{lopes2004bayesian}
Lopes, H.~F. and M.~West (2004).
\newblock Bayesian model assessment in factor analysis.
\newblock {\em Statistica Sinica\/}, 41--67.

\bibitem[\protect\citeauthoryear{McCracken and Ng}{McCracken and
  Ng}{2016}]{mccracken2016fred}
McCracken, M.~W. and S.~Ng (2016).
\newblock {FRED-MD:} a monthly database for {Macroeconomic} research.
\newblock {\em Journal of Business \& Economic Statistics\/}~{\em 34\/}(4),
  574--589.

\bibitem[\protect\citeauthoryear{Mian, Rao, and Sufi}{Mian
  et~al.}{2013}]{mian2013household}
Mian, A., K.~Rao, and A.~Sufi (2013).
\newblock Household balance sheets, consumption, and the economic slump.
\newblock {\em The Quarterly Journal of Economics\/}~{\em 128\/}(4),
  1687--1726.

\bibitem[\protect\citeauthoryear{Mian and Sufi}{Mian and
  Sufi}{2009}]{mian2009consequences}
Mian, A. and A.~Sufi (2009).
\newblock The consequences of mortgage credit expansion: {Evidence} from the
  {US} mortgage default crisis.
\newblock {\em The Quarterly Journal of Economics\/}~{\em 124\/}(4),
  1449--1496.

\bibitem[\protect\citeauthoryear{Mian and Sufi}{Mian and
  Sufi}{2011}]{mian2011house}
Mian, A. and A.~Sufi (2011).
\newblock House prices, home equity-based borrowing, and the {US} household
  leverage crisis.
\newblock {\em American Economic Review\/}~{\em 101\/}(5), 2132--56.

\bibitem[\protect\citeauthoryear{Nakajima and West}{Nakajima and
  West}{2013a}]{nakajima2013bayesian}
Nakajima, J. and M.~West (2013a).
\newblock Bayesian analysis of latent threshold dynamic models.
\newblock {\em Journal of Business \& Economic Statistics\/}~{\em 31\/}(2),
  151--164.

\bibitem[\protect\citeauthoryear{Nakajima and West}{Nakajima and
  West}{2013b}]{nakajima2012dynamic}
Nakajima, J. and M.~West (2013b).
\newblock Dynamic factor volatility modeling: A {Bayesian} latent threshold
  approach.
\newblock {\em Journal of Financial Econometrics\/}~{\em 11\/}(1), 116--153.

\bibitem[\protect\citeauthoryear{Nakajima, West, et~al.}{Nakajima
  et~al.}{2017}]{nakajima2017dynamics}
Nakajima, J., M.~West, et~al. (2017).
\newblock Dynamics \& sparsity in latent threshold factor models: {A} study in
  multivariate {EEG} signal processing.
\newblock {\em Brazilian Journal of Probability and Statistics\/}~{\em
  31\/}(4), 701--731.

\bibitem[\protect\citeauthoryear{Onatski}{Onatski}{2009}]{onatski2009testing}
Onatski, A. (2009).
\newblock Testing hypotheses about the number of factors in large factor
  models.
\newblock {\em Econometrica\/}~{\em 77\/}(5), 1447--1479.

\bibitem[\protect\citeauthoryear{Pitt and Shephard}{Pitt and
  Shephard}{1999}]{pitt1999time}
Pitt, M. and N.~Shephard (1999).
\newblock Time varying covariances: a factor stochastic volatility approach.
\newblock {\em Bayesian Statistics\/}~{\em 6}, 547--570.

\bibitem[\protect\citeauthoryear{Prado and West}{Prado and
  West}{2010}]{Prado2010}
Prado, R. and M.~West (2010).
\newblock {\em {Time Series:} {Modelling}, {Computation} \& {Inference}}.
\newblock Chapman \& Hall/CRC Press.

\bibitem[\protect\citeauthoryear{Reinhart and Rogoff}{Reinhart and
  Rogoff}{2008}]{reinhart20082007}
Reinhart, C.~M. and K.~S. Rogoff (2008).
\newblock Is the 2007 {US} sub-prime financial crisis so different? {An}
  international historical comparison.
\newblock {\em American Economic Review\/}~{\em 98\/}(2), 339--44.

\bibitem[\protect\citeauthoryear{Rockova}{Rockova}{2018}]{rovckova2018bayesian}
Rockova, V. (2018).
\newblock Bayesian estimation of sparse signals with a continuous
  spike-and-slab prior.
\newblock {\em The Annals of Statistics\/}~{\em 46\/}(1), 401--437.

\bibitem[\protect\citeauthoryear{Rockova and George}{Rockova and
  George}{2016}]{rovckova2016fast}
Rockova, V. and E.~I. George (2016).
\newblock Fast {B}ayesian factor analysis via automatic rotations to sparsity.
\newblock {\em Journal of the American Statistical Association\/}~{\em
  111\/}(516), 1608--1622.

\bibitem[\protect\citeauthoryear{Rockova and George}{Rockova and
  George}{2018}]{rovckova2016spike}
Rockova, V. and E.~I. George (2018).
\newblock The {S}pike-and-{S}lab {LASSO}.
\newblock {\em Journal of the American Statistical Association\/}~{\em 113},
  431--444.

\bibitem[\protect\citeauthoryear{Rockova and McAlinn}{Rockova and
  McAlinn}{2017}]{rockova2018dynamic}
Rockova, V. and K.~McAlinn (2017).
\newblock Dynamic variable selection with spike-and-slab process priors.
\newblock {\em Booth School of Business Technical Report\/}.

\bibitem[\protect\citeauthoryear{Rubin and Thayer}{Rubin and
  Thayer}{1982}]{rubin1982algorithms}
Rubin, D.~B. and D.~T. Thayer (1982).
\newblock {EM} algorithms for {ML} factor analysis.
\newblock {\em Psychometrika\/}~{\em 47\/}(1), 69--76.

\bibitem[\protect\citeauthoryear{Sargent, Sims, et~al.}{Sargent
  et~al.}{1977}]{sargent1977business}
Sargent, T.~J., C.~A. Sims, et~al. (1977).
\newblock Business cycle modeling without pretending to have too much a priori
  economic theory.
\newblock {\em New methods in Business cycle research\/}~{\em 1}, 145--168.

\bibitem[\protect\citeauthoryear{Stock and Watson}{Stock and
  Watson}{2002}]{stock2002forecasting}
Stock, J.~H. and M.~W. Watson (2002).
\newblock Forecasting using principal components from a large number of
  predictors.
\newblock {\em Journal of the American Statistical Association\/}~{\em
  97\/}(460), 1167--1179.

\bibitem[\protect\citeauthoryear{Stock and Watson}{Stock and
  Watson}{2010}]{stock2010modeling}
Stock, J.~H. and M.~W. Watson (2010).
\newblock Modeling inflation after the crisis.
\newblock Technical report, National Bureau of Economic Research.

\bibitem[\protect\citeauthoryear{Stock and Watson}{Stock and
  Watson}{2016}]{stock2016dynamic}
Stock, J.~H. and M.~W. Watson (2016).
\newblock Dynamic factor models, factor-augmented vector autoregressions, and
  structural vector autoregressions in {Macroeconomics}.
\newblock In {\em Handbook of Macroeconomics}, Volume~2, pp.\  415--525.
  Elsevier.

\bibitem[\protect\citeauthoryear{Watson and Engle}{Watson and
  Engle}{1983}]{watson1983alternative}
Watson, M.~W. and R.~F. Engle (1983).
\newblock Alternative algorithms for the estimation of dynamic factor, mimic
  and varying coefficient regression models.
\newblock {\em Journal of Econometrics\/}~{\em 23\/}(3), 385--400.

\bibitem[\protect\citeauthoryear{West}{West}{2003}]{West2003}
West, M. (2003).
\newblock Bayesian factor regression models in the "large p, small n" paradigm.
\newblock In {\em Bayesian Statistics 7}, pp.\  723--732. Oxford University
  Press.

\bibitem[\protect\citeauthoryear{West and Harrison}{West and
  Harrison}{1997}]{WestHarrison1997book2}
West, M. and P.~J. Harrison (1997).
\newblock {\em Bayesian {Forecasting} \& {Dynamic} {Models}\/} (2nd ed.).
\newblock Springer Verlag.

\bibitem[\protect\citeauthoryear{Witten, Tibshirani, and Hastie}{Witten
  et~al.}{2009}]{witten2009penalized}
Witten, D.~M., R.~Tibshirani, and T.~Hastie (2009).
\newblock A penalized matrix decomposition, with applications to sparse
  principal components and canonical correlation analysis.
\newblock {\em Biostatistics\/}~{\em 10\/}(3), 515--534.

\bibitem[\protect\citeauthoryear{Yoshida and West}{Yoshida and
  West}{2010}]{yoshida2010bayesian}
Yoshida, R. and M.~West (2010).
\newblock Bayesian learning in sparse graphical factor models via variational
  mean-field annealing.
\newblock {\em Journal of Machine Learning Research\/}~{\em 11\/}(May),
  1771--1798.

\bibitem[\protect\citeauthoryear{Zuur, Fryer, Jolliffe, Dekker, and
  Beukema}{Zuur et~al.}{2003}]{zuur2003estimating}
Zuur, A.~F., R.~Fryer, I.~Jolliffe, R.~Dekker, and J.~Beukema (2003).
\newblock Estimating common trends in multivariate time series using dynamic
  factor analysis.
\newblock {\em Environmetrics\/}~{\em 14\/}(7), 665--685.

\end{thebibliography}

\newpage

\clearpage
\setcounter{page}{1}
\begin{center}
{\Large Dynamic Sparse Factor Analysis} 

\bigskip
{\Large  Supplementary Material} 

\bigskip\bigskip
\end{center}

\appendix

\section{Appendix}

\subsection{Derivation of the E-step}\label{sec:estep}
We now outline the steps of the parameter expanded EM algorithm.
In the E-step, we compute the conditional expectation of the augmented and expanded log-posterior with respect to the missing data $\BOmega$ and $\BGamma$, given observed data $\Y$ and  the parameter values $\BD^{(m)}$ obtained at the previous M-step setting $\A_t=\sigma^2_\omega\bm I_K$. We can write

\begin{align}\label{eq:expect}
\E_{\BGamma, \BOmega \C \Y, \Delta^{(m)}} [\log \pi (\B^\star_{0:T},\bSigma_{1:T}, \A_{1:T}, \BGamma, \BOmega \C\Y)] =& Q_1(\B_{0:T}^\star \C \bSigma_{1:T}) + Q_2(\bSigma_{1:T})  + Q_3(\A_{1:T}) + C.
\end{align}
Define  
 $\bomega_{t|T}=\E_{\BOmega}[\bomega_t\C\Y,\bm\Delta^{(m)}]$,  $\V_{t|T}=\mathrm{cov}[\bomega_t\C\Y,\BD^{(m)}]$. 
 The terms $\bomega_{t|T}$ and $\V_{t|T}$ represent the best linear estimator for $\bomega_t$ using all observations and the corresponding covariance matrix, respectively. With $\V_{t,t-1|T}=\mathrm{cov}[\bomega_t,\bomega_{t-1}\C\Y,\BD^{(m)}]$ we denote the covariance matrix of $\bomega_t$ and $\bomega_{t-1}$ given the data $\Y$ and $\BD^{(m)}$. These quantities can be obtained from the Kalman Filter and Smoother Algorithm (Table \ref{kalman}).

  The functions $Q_1(\cdot)$, $Q_2(\cdot)$ and $Q_3(\cdot)$ in \eqref{eq:expect} can be written as follows:
\begin{align*}
- Q_1(\B_{0:T}^\star \C \bSigma_{1:T})  =& C+\frac{1}{2}\sum_{t=1}^T \sum_{j=1}^P \log \sigma_{jt}^2 \\
&+ tr\left\{\frac{1}{2} \sum_{t=1}^T \bm{\Sigma}_t^{-1} \left[(\Y_t-\B_t^\star \bomega_{t|T})(\Y_t-\B_t^\star \bomega_{t|T})'+\B_t^\star \V_{t|T} \B^{\star'}_t\right]\right\} \\
& + \sum_{j=1}^P \sum_{k=1}^K \left[ \frac{\langle \gamma_{jk}^0 \rangle (\beta_{jk}^{0*})^2}{2\lambda_1/(1-\phi^2)}+ (1-\langle \gamma_{jk}^0 \rangle) | \beta_{jk}^{0*} | \lambda_{0} \right]\\
& + \sum_{t=1}^T \sum_{j=1}^P \sum_{k=1}^K \left[\frac{ \langle \gamma_{jk}^t \rangle (\beta_{jk}^{t*}-\phi \beta_{jk}^{t-1*})^2}{2\lambda_1}
+(1-\langle \gamma_{jk}^t \rangle) | \beta_{jk}^{t*} | \lambda_0 \right],
\end{align*}
where
\begin{align*}
\langle \gamma_{jk}^0 \rangle & = \frac{\Theta \psi_1(\beta_{jk}^{0}|0,\frac{\lambda_1}{1-\phi^2})}{\Theta \psi_1(\beta_{jk}^{0}|0,\frac{\lambda_1}{1-\phi^2})+ (1-\Theta) \psi_0(\beta_{jk}^{0}|0,\lambda_0)}, \\ 
\langle \gamma_{jk}^t \rangle & = \frac{\theta_{jk}^t \psi_1(\beta_{jk}^{t}|\phi \beta_{jk}^{t-1},\lambda_1)}{\theta_{jk}^t \psi_1(\beta_{jk}^{t}|\phi \beta_{jk}^{t-1},\lambda_1)+ (1-\theta_{jk}^t) \psi_0(\beta_{jk}^{t}|0,\lambda_0)},
\end{align*}
{\begin{align*}
-Q_2(\bSigma_{1:T})& =\sum_{t=1}^{T-1} \sum_{j=1}^P \left [pen(\sigma_{jt}^2 \C \sigma_{j(t-1)}^2) + pen(\sigma_{jt}^2 \C \sigma_{j(t+1)}^2) \right ] + \sum_{j=1}^P pen(\sigma_{jT}^2 \C \sigma_{j(T-1)}^2)
\end{align*}
where 
\begin{align*}
    pen(\sigma_{jt}^2 \C \sigma_{j(t-1)}^2) & = \left(\frac{\delta n_{t-1}}{2}-1\right)\log{\sigma_{jt}^2}- \left(\frac{(1-\delta)n_{t-1}}{2}-1\right) \log {\left(1-\frac{\delta \sigma_{j(t-1)}^2}{\sigma_{jt}^2}\right)}, \\
    pen(\sigma_{jt}^2 \C \sigma_{j(t+1)}^2) & = -\left(\frac{\delta n_{t}}{2}-1\right)\log{\sigma_{jt}^2} + \left(\frac{(1-\delta)n_{t}}{2}-1\right) \log {\left(1-\frac{\delta \sigma_{jt}^2}{\sigma_{j(t+1)}^2}\right)}, \\
\end{align*}
and

{
\spacingset{1}
\begin{table}[t!]
\small
\begin{center}
\scalebox{0.8}{
\begin{tabular}{|l|l|}
\hline
\multicolumn{2}{|c|}{\cellcolor[HTML]{C0C0C0} \textbf{Algorithm:} \textit{ Kalman Filter and Smoother}} \\ \hline \hline
\multicolumn{2}{|c|}{Initialize $\bomega_{0 \C 0}=\bm 0$ and $\V_{0 \C 0}= \sigma^2_\omega/(1-\wt\phi^2)\bm I_K$}                                                    \\
\multicolumn{2}{|c|}{Repeat the Prediction Step and Correction Step for $t=1,\dots,T$}         \\ \hline
\cellcolor[HTML]{C0C0C0}Prediction Step                   &   $\bomega_{t \C t-1}=\bomega_{t-1 \C t-1}$                 \\
                                                          &    $\V_{t \C t-1} = \V_{t-1 \C t-1}+ \sigma^2_\omega\bm I_{K}$                   \\ \hline
\cellcolor[HTML]{C0C0C0}Correction Step                   &  $\bm{K}_t=\V_{t \C t-1}\B_t'(\B_t\V_{t \C t-1}\B_t'+\bSigma_t)^{-1}$                      \\
    &      $\bomega_{t \C t}=\bomega_{t \C t-1}+\bm{K}_t(\Y_t-\B_t\bomega_{t \C t-1})$              \\
 &  $\V_{t \C t}=\V_{t \C t-1}-\bm{K}_t\B_t \V_{t \C t-1}$    \\ \hline
\multicolumn{2}{|c|}{Initialize $\V_{T,T-1 \C T}=(I-\bm{K}_T\B_T)\V_{T-1 \C T-1}$}                                                    \\
\multicolumn{2}{|c|}{Repeat the {smoothing step for $t=T,\dots,1$} }                              \\ \hline
\cellcolor[HTML]{C0C0C0}Smoothing Step                    &  $\bomega_{t-1 \C T}= \bomega_{t-1 \C t-1}+\bm{Z}_{t-1}(\bomega_{t \C T}-\bomega_{t \C t-1})$                 \\
&  $\V_{t-1 \C T}=\V_{t-1 \C t-1}+\bm{Z}_{t-1}(\V_{t \C T}-\V_{t \C t-1})\bm{Z}_{t-1}'$                      \\
&  $\V_{t,t-1 \C T} = \V_{t-1 \C t-1}\bm{Z}_{t-2}'+\bm{Z}_{t-1}(\V_{t,t-1 \C T}-\V_{t-1 \C t-1})\bm{Z}_{t-2}'$                     \\
& where     $\bm{Z}_{t-1}=\V_{t-1 \C t-1}\V_{t \C t-1}^{-1}$              \\ \hline
\end{tabular}}
\end{center}
\caption{\small Kalman Filter and Smoother Algorithm for Parameter Expanded EM using rotated loading matrices $\bm{B}_{1:T}$}
\label{kalman}
\end{table}}

\vspace{-30pt}
\begin{align*}
- Q_3(\A_{1:T}) &= \frac{1}{2} \sum_{t=1}^T \log |\A_t| +\frac{1}{2} tr \{\A_t^{-1}(\bm{M}_{1t}-\bm{M}_{12t}-\bm{M}_{12t}'+\bm{M}_{2t})\},
\end{align*}
where 
\begin{align*}
\bm{M}_{1t} & = (\bomega_{t-1 \C T}\bomega_{t-1 \C T}'+\V_{t-1 \C T}), \\
\bm{M}_{12t} & =  (\bomega_{t-1 \C T}\bomega_{t \C T}'+\V_{t,t-1 \C T}), \\
\bm{M}_{2t} & =  (\bomega_{t \C T}\bomega_{t \C T}'+\V_{t \C T}).
\end{align*}

}

\subsection{Derivation of the M-step}\label{sec:mstep}

In the M-step, we optimize the function $Q_1(\cdot)$   with respect to   $\B_{0:T}^\star$, given values of $\bSigma_{1:T}$ from the previous M-step. Given  the new values $\B_{0:T}^{\star(m+1)}$ and the posterior moment estimates of the latent factors obtained from the Kalman filter, we optimize $Q_1(\cdot)+Q_2(\cdot)$, with respect to $\bSigma_{1:T}$. {Finally, we optimize the function $Q_3(\cdot)$  with respect to $\A_{1:T}$}.

Optimizing $Q_1(\cdot)$ with respect to  $\B^\star_{0:T}$  boils down to solving a series of independent dynamic spike and slab LASSO regressions (similarly as in\citep{rockova2018dynamic}).  This is justified by the following lemma.

\begin{lemma}\label{A1}
Let $\Y^t=(Y^t_1,\dots, Y^t_P)' \in \mathbb{R}^{P}$ denote the snapshot of the series  at time $t$ and
for $1\leq j\leq P$ define a zero-augmented response vector at time $t$ with $\wt{\Y}^t_j=(Y^t_j,\underbrace{0,\dots,0}_{K})'\in\R^{K+1}$.
For the SVD decomposition $\V_{t \C T}= \sum_{k=1}^K s_k \bm{U}_k^t (\bm{U}_k^t)'$, we denote  with $\wt{\bm{U}}_k^t=\sqrt{s_k} \bm{U}_k^t$ and 
with $\bm\Omega^t=[\bomega_{t|T},\wt{\bm{U}}_1^t, \dots,\wt{\bm{U}}_K^t]' \in \mathbb{R}^{(1+K) \times K}$ and we let $\bm{\beta}^{t \star' }_j \in \mathbb{R}^K$ be the $j^{th}$ row of $\bm{B}_t^{\star}$.
Then we can decompose
$$
Q_1(\bm{B}^\star_{0:T}\C\bSigma_{1:T})=C+
 \sum_{j=1}^P \left[Q_j(\bm{\beta}^{t \star}_j)+Q^0(\bm{\beta}^{0 \star}_j) +\wt Q(\bm{\beta}^{1 \star}_j,\dots,\bm{\beta}^{T \star}_j)\right],
$$
where
\begin{align*}
Q^0(\bm{\beta}^{0 \star}_j) & =\sum_{k=1}^K \left[ \frac{\langle \gamma_{jk}^0 \rangle (\beta_{jk}^{0*})^2}{2\lambda_1/(1-\phi^2)}+ (1-\langle \gamma_{jk}^0 \rangle) | \beta_{jk}^{0*} | \lambda_{0} \right] \\
Q_j(\bm{\beta}^{t \star}_j) & = 
\sum_{t=1}^T\left[\frac{1}{2}\log \sigma_{jt}^2
+ \frac{1}{2\sigma_{jt}^2} ||\wt{\Y}^t_j- {\bm\Omega}^t \bm{\beta}^{t \star}_j||^2_2\right]\\
\wt Q(\bm{\beta}^{1 \star}_j,\dots,\bm{\beta}^{T \star}_j)&=\sum_{t=1}^T \sum_{k=1}^K \left[\frac{ \langle \gamma_{jk}^t \rangle (\beta_{jk}^{t*}-\phi \beta_{jk}^{t-1*})^2}{2\lambda_1}
+(1-\langle \gamma_{jk}^t \rangle) | \beta_{jk}^{t*} | \lambda_0 \right].
\end{align*} 
\end{lemma}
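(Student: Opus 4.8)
The plan is to start from the explicit formula for $Q_1(\cdot)$ obtained in the E-step derivation (i.e. $Q_1=-[\cdots]$ with the displayed right-hand side) and show that each of its four pieces can be regrouped as an outer sum over the row index $j=1,\dots,P$. Two of the pieces are immediate: the $t=0$ prior term is already written as $\sum_{j=1}^P\sum_{k=1}^K[\cdots]$ and matches $\sum_j Q^0(\bm{\beta}^{0\star}_j)$ verbatim, while the $t\geq 1$ autoregressive penalty term matches $\sum_j\wt Q(\bm{\beta}^{1\star}_j,\dots,\bm{\beta}^{T\star}_j)$ verbatim. The entire content of the lemma therefore reduces to showing that the Gaussian data-fidelity contribution, namely the log-determinant term $\tfrac12\sum_{t,j}\log\sigma_{jt}^2$ together with the trace term, equals $\sum_{j=1}^P Q_j(\bm{\beta}^{t\star}_j)$.

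First I would exploit that $\bSigma_t=\mathrm{diag}\{\sigma_{1t}^2,\dots,\sigma_{Pt}^2\}$ is diagonal, so that $\mathrm{tr}\{\bSigma_t^{-1}M\}=\sum_{j=1}^P\sigma_{jt}^{-2}M_{jj}$ for any symmetric $M$. Applying this to $M_t=(\Y_t-\B^\star_t\bomega_{t\C T})(\Y_t-\B^\star_t\bomega_{t\C T})'+\B^\star_t\V_{t\C T}(\B^\star_t)'$ and reading off the $(j,j)$ entries, which, since $(\bm{\beta}^{t\star}_j)'$ is the $j$th row of $\B^\star_t$, equal $(Y^t_j-(\bm{\beta}^{t\star}_j)'\bomega_{t\C T})^2$ and $(\bm{\beta}^{t\star}_j)'\V_{t\C T}\bm{\beta}^{t\star}_j$ respectively, produces a clean decoupling of the trace both over $j$ and over $t$.

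The key step is then the augmented least-squares identity
\[
\|\wt{\Y}^t_j-\bm\Omega^t\bm{\beta}^{t\star}_j\|_2^2=(Y^t_j-(\bm{\beta}^{t\star}_j)'\bomega_{t\C T})^2+(\bm{\beta}^{t\star}_j)'\V_{t\C T}\bm{\beta}^{t\star}_j.
\]
To establish it I would expand the augmented norm according to the block structure of $\bm\Omega^t=[\bomega_{t\C T},\wt{\bm{U}}^t_1,\dots,\wt{\bm{U}}^t_K]'$ and of $\wt{\Y}^t_j=(Y^t_j,0,\dots,0)'$: the first coordinate reproduces the squared fitting residual, while coordinates $2,\dots,K+1$ contribute $\sum_{k=1}^K((\wt{\bm{U}}^t_k)'\bm{\beta}^{t\star}_j)^2$. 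Substituting $\wt{\bm{U}}^t_k=\sqrt{s_k}\,\bm{U}^t_k$ and invoking the spectral decomposition $\V_{t\C T}=\sum_{k=1}^K s_k\bm{U}^t_k(\bm{U}^t_k)'$, which is legitimate because the smoother covariance $\V_{t\C T}$ is symmetric positive semidefinite so its SVD coincides with its eigendecomposition, turns this sum into the quadratic form $(\bm{\beta}^{t\star}_j)'\V_{t\C T}\bm{\beta}^{t\star}_j$. Combining the identity with the row-wise trace decoupling and attaching the $\tfrac12\log\sigma_{jt}^2$ terms yields exactly $\sum_{j=1}^P Q_j(\bm{\beta}^{t\star}_j)$, and collecting the three regrouped pieces completes the proof.

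I expect the only genuine obstacle to be the bookkeeping in the augmented-norm identity, in particular making sure the spectral factorization is applied to the correct (smoother) covariance $\V_{t\C T}$ rather than to a filtered or predictive one, and that the scaling $\sqrt{s_k}$ is assigned so that the absolute-value/cross terms vanish and the quadratic form reassembles correctly; everything else is diagonal-matrix algebra and reindexing of the triple sums.
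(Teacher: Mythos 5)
Your proposal is correct and follows essentially the same route as the paper's own proof: both exploit the diagonality of $\bSigma_t$ to decouple the trace over rows $j$ and the spectral/SVD decomposition of $\V_{t\C T}$ (via the scaled vectors $\wt{\bm{U}}_k^t$) to reassemble the quadratic form as the zero-augmented least-squares norm $\|\wt{\Y}^t_j- {\bm\Omega}^t \bm{\beta}^{t \star}_j\|^2_2$, with the prior and $\log\sigma_{jt}^2$ terms regrouping trivially. The only difference is cosmetic ordering: the paper applies the SVD at the matrix level before splitting over $j$, while you split over $j$ first and then verify the augmented-norm identity coordinate-wise.
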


\begin{proof}
Denote with
\begin{equation*}
L\equiv tr\left\{\frac{1}{2} \sum_{t=1}^T \bm{\Sigma}_t^{-1} \left[(\Y_t-\B_t^\star \bomega_{t|T})(\Y_t-\B_t^\star \bomega_{t|T})'+\B_t^\star \V_{t|T} \B^{\star'}_t\right]\right\}.
\end{equation*}
Because
$\bm{B}_t^{\star} \V_{t \C T} \bm{B}^{\star'} _t = \bm{B}_t^\star \sum_{k=1}^K s_k \bm{U}_k^t \bm{U}_k^{t'} (\bm{B}^{\star}_t)' = \sum_{k=1}^K (\bm{0}-\bm{B}^\star_t \wt{\bm{U}}_k^t)(\bm{0}-\bm{B}^\star_t \wt{\bm{U}}_k^t)',$
we have
\begin{equation*}
tr\left\{ \bm{\Sigma}_t^{-1} \bm{B}_t^{\star} \V_{t \C T} \bm{B}^{\star'} _t \right \} = \sum_{k=1}^K (\bm{0}-\bm{B}^\star_t \wt{\bm{U}}_k^t)'\bm{\Sigma}_t^{-1}(\bm{0}-\bm{B}^\star_t \wt{\bm{U}}_k^t).
\end{equation*}
Since $\bSigma_t= \mathrm{diag}(\sigma_{1t}^2, \dots, \sigma_{Pt}^2)$, we have 
\begin{align*}
L& =\frac{1}{2} \sum_{j=1}^P \sum_{t=1}^T  \left[\frac{(Y^t_j- \bomega_{t\C T}' \bm{\beta}^{t \star}_j)^2} {\sigma_{jt}^2} + \sum_{k=1}^K \frac{({0} - \wt{\bm{U}}_k^{t\prime} \bm{\beta}^{t\star}_j)^2} {\sigma_{jt}^2}\right] \\
& =  \sum_{j=1}^P \sum_{t=1}^T \frac{1}{2\sigma_{jt}^2} ||\wt{\Y}^t_j- {\bm\Omega}^t \bm{\beta}^{t \star}_j||^2_2.\qedhere
\end{align*}
\end{proof}

Each summand $Q_j(\bm{\beta}^{t \star}_j)+Q^0(\bm{\beta}^{0 \star}_j) +\wt Q(\bm{\beta}^{1 \star}_j,\dots,\bm{\beta}^{T \star}_j)$ corresponds to a penalized dynamic regression with $K+1$ observations at each time $t$.
Given $\bSigma_t$, finding $\B^{\star(m+1)}$ thereby reduces to solving these $J$ individual regressions. As shown in \cite{rockova2018dynamic}, each regression can be decomposed into a sequence of univariate optimization problems. We use the one-step late EM variant in \cite{rockova2018dynamic}  to obtain closed form one-site updates for each  $\beta_{jk}^{\star t}$ for $(j,k,t)$. Note that this corresponds to a generalized EM, which is aimed at improving the objective relative to the last iteration (not necessarily maximizing it).

These univariate updates are  slightly different from  \cite{rockova2018dynamic}, because we now have $K+1$ observations at time $t$, not just one.
Denote with $\widehat{\beta}_{jl}^{*t}$ the most recent update of the coefficient $\beta_{jl}^{\star t}$.
Let 
 $$
 z_{jk}^{t}=\frac{1}{\sigma_{jt}^2}\sum_{r=1}^{K+1} (\wt Y_{jr}^{t}- \sum_{l \neq k} \wt{\omega}_{rl}^{t} \widehat{\beta}_{jl}^{t*}) \wt{\omega}_{rk}^{t}
 $$
 and denote 
 $$
 Z_{jk}^{t}=z_{jk}^{t} + \frac{\langle \gamma_{jk}^{t} \rangle \phi_1}{\lambda_1} \widehat{\beta}_{jk}^{t-1}+ \frac{\langle \gamma_{jk}^{t+1}\rangle \phi_1}{\lambda_1} \widehat{\beta}_{jk}^{t+1}
 $$ 
 and 
 $$
 W_{jk}^{t}=\frac{1}{\sigma_{jt}^2}\sum_{r=1}^{K+1}(\wt{\omega}^t_{rk})^{2} + \frac{\langle \gamma_{jk}^t \rangle}{\lambda_1} + \frac{\langle \gamma_{jk}^{t+1} \rangle \phi_1^2}{\lambda_1}. $$
 
Then from the calculations in Section 6 of  \cite{rockova2018dynamic} (equations (30)-(33)) we obtain the following update for $\widehat{\beta}_{jk}^{*t}$:
\begin{equation}\label{update_betas}
{\beta}_{jk}^{t\star(m+1)} = 
\begin{cases}
\frac{1}{W_{jk}^t+(1-\phi_1^2)/\lambda_1 M_{jk}^t}[Z_{jk}^{t}- \Lambda_{jk}^t]_{+} \mathrm{sign}(Z_{jk}^{t}) & \quad \text{for}\quad1 < t < T\\
  \frac{1}{\langle \gamma_{jk}^1 \rangle \phi_1^2 + \langle \gamma_{jk}^0 \rangle (1-\phi_1^2)}[\langle \gamma_{jk}^0 \rangle \wh{\beta}_{jk}^1 \phi_1 - (1-\langle \gamma_{jk}^0 \rangle)\lambda_0 \lambda_1]_{+}\mathrm{sign}(\wh{\beta}_{jk}^1) & \quad \text{for}\quad t=0
\end{cases}
\end{equation}
where $M_{jk}^t =\langle \gamma_{jk}^{t+1} \rangle (1- \theta_{jk}^{t+1}) - (1-\langle \gamma_{jk}^{t+1} \rangle) \theta_{jk}^{t+1}$ and $\Lambda_{jk}^t = \lambda_0[(1- \langle \gamma_{jk}^t \rangle) - M_{jk}^t]$.

Given $\B^{\star(m+1)}$, optimizing $Q_1(\cdot)+Q_2(\cdot)$ with respect to $\bSigma_{1:T}$ is done using the Forward Filtering Backward Smoothing algorithm \citep[Ch.~4.3.7][]{Prado2010}. 
In order to maximize the posterior log likelihood with respect to $\bSigma_{1:T}$, we first estimate the parameters of the posterior distribution $\pi(\bSigma_{1:T}\C \Omega,\Y)$, given the updated factor loading matrices $B_{1:T}$,  and then calculate the mode of the posterior.
Although the exact analytical posterior is unattainable, a fast Gamma approximation exists \citep[Ch.~10.8][]{WestHarrison1997book2}. Appropriate Gamma approximations to the posterior have the form 
$$
\pi(1/\sigma_{j,T-k}^2 \C \Omega, \Y) = \bm{G}[\eta_{jT}(-k)/2,d_{jT}(-k)/2],
$$
where $d_{jT}(-k)= \eta_{jT}(-k)s_{jT}(-k)$, with
$$s_{jT}(-k)^{-1}=(1-\delta)s_{j,T-k}^{-1}+\delta s_{jT}(-k+1)^{-1}$$,
and filtered degrees of freedom defined by
$$
\eta_{jT}(-k)= (1-\delta)\eta_{j,T-k}+\delta \eta_{j,T-k+1},
$$
initialized at $\eta_{jT}(0)=\eta_{jT}$. Here $s_{j,T-k}$ denotes $\mathbb{E}(\sigma_{j,T-k}^2 \mid \Omega_{T-k}, \Y_{T-k})$. The details of the algorithm is given in Algorithm \ref{ffbs}. In the algorithm we denote the diagonal matrices with diagonal entries $\eta_{j,T-k}$ by $\bm{\eta}_{T-k}$ and analogously define matrices $\bm{D}_{T}(-k)$, $\bm{S}_{T-k}$ and $S_{T}(-k)$ for $k=0,1,\dots,T-1$ so that we can update the parameters of the posterior distribution simultaneously for all $j$ and fixed $t$.
In our study, we set the prior degrees of freedom $\eta_0$ to its limit  $\eta_0=(1-\delta)^{-1}$ in order to achieve stability and efficiency.
Given the parameters of the posterior distribution (the expectation and degrees of freedom), computing the posterior mode is straight forward.

{
Finally, the updates for the covariance matrices $\A_{1:T}$, obtained by maximizing $Q_3(\cdot)$, have the following closed form
\begin{align*}
{\A_t^{(m+1)}} & = \bm{M}_{1t}-\bm{M}_{12t}-\bm{M}_{12t}'+\bm{M}_{2t} \quad for \quad t=1,\dots,T.
\end{align*}
}
After completing the expanded M-step in the $(m+1)^{st}$ iteration, we perform a rotation step 
towards the reduced parameter space to obtain
$$
{\bm{B}_t}^{(m+1)}={\bm{B}_t^{\star(m+1)}}{\A_{tL}}^{(m+1)},
$$
where ${\A_t}^{(m+1)}={\A_{tL}^{(m+1)}}{\A_{tL}}^{(m+1)'}$ is the Cholesky decomposition.  These rotated factor loading matrices are carried forward to the next E-step, where we again use the reduced parameter form by keeping $\A_t=\sigma^2_\omega \bm I$.

{
\spacingset{1}
\begin{table}[t]
\small
\begin{center}
\scalebox{0.8}{
\begin{tabular}{|l|l|}
\hline
\multicolumn{2}{|c|}{\cellcolor[HTML]{C0C0C0} \textbf{Algorithm:} \textit{Forward Filtering Backward Smoothing}} \\ \hline \hline
\multicolumn{2}{|c|}{Input: $\B_{1:T}$ and $\bSigma_{1:T}$ from previous iteration}                                                    \\
\multicolumn{2}{|c|}{Initialize $\bm{\eta}_0$, $\bm{D}_0$, $\bm{S}_0=\bm{D}_0 \bm{\eta}_0^{-1}$}                                                    \\
\multicolumn{2}{|c|}{Repeat the Forward Step for $t=1,\dots,T$}                                \\ \hline
\cellcolor[HTML]{C0C0C0}Forward Step                      & $\bm{\eta}_t=\delta \bm{\eta}_{t-1}+\mathrm{I}$                     \\
& $\bm{D}_t=\delta \bm{D}_{t-1} + \bm{S}_{t-1}\bm{E}_t\bm{E}_t'\bm{Q}_t^{-1}$                       \\
& $\bm{S}_t=\bm{D}_t \bm{\eta}_t^{-1}$ \\
 where & $\bm{E}_t=\Y_t-\B_t \bomega_{t \C t-1}$\\
 & $\bm{Q}_t= \B_t'\V_{t \C t-1}\B_t + \bSigma_t $\\ \hline
\multicolumn{2}{|c|}{Initialize $\bm{S}_T(0)=S_T$}                                                    \\
\multicolumn{2}{|c|}{{Repeat the Backward Step for $k=1,\dots,T-1$}}                               \\ \hline
\cellcolor[HTML]{C0C0C0}Backward Step      & $\bm{\eta}_T(-k)=(1-\delta) \bm{\eta}_{T-k} + \delta \bm{\eta}_{T-k+1}$ \\
& $\bm{S}_T(-k)^{-1}= (1-\delta) \bm{S}_{T-k}^{-1} + \delta \bm{S}_T (-k+1)^{-1}$                       \\
& $\bm{D}_T(-k) = \bm{\eta}_T(-k) \bm{S}_T(-k)$                       \\ 
& $\bUpsilon_{T-k} = (\bm{\eta}_T(-k) - \mathrm{I})\bm{D}_T(-k)^{-1}$ \\ \hline
\cellcolor[HTML]{C0C0C0}Compute Mode                     & $\bSigma_{T-k}=\bUpsilon_{T-k}^{-1}$                       \\ \hline
\end{tabular}}
\end{center}
\caption{\small Forward Filtering Backward Smoothing algorithm for estimating idiosyncratic variances.}
\label{ffbs}
\end{table}}

\section{Appendix: B}
\renewcommand{\thefigure}{B\arabic{figure}}\setcounter{figure}{0}
\renewcommand{\thetable}{B\arabic{table}}\setcounter{table}{0}

\subsection{Additional Tables and Graphs}\label{app:tables}

\begin{sidewaystable}[htbp!]
\setlength\tabcolsep{2.5pt}
\scriptsize
\begin{tabular}{llllllllllllllll}
\hline
   & \begin{tabular}[c]{@{}l@{}}Output \\ and \\ Income\end{tabular} &    & Labor Market  &    & \begin{tabular}[c]{@{}l@{}}Consumption \\ and \\ Orders\end{tabular} &    & \begin{tabular}[c]{@{}l@{}}Orders \\ and \\ Inventories\end{tabular} &    & \begin{tabular}[c]{@{}l@{}}Money \\ and \\ Credit\end{tabular} &     & \begin{tabular}[c]{@{}l@{}}Interest rate \\ and \\ Exchange Rates\end{tabular} &     & Prices          &     & Stock Market   \\ \hline
1  & RPI                                                             & 17 & HWI           & 48 & HOUST                                                                & 58 & DPCERA3M086SBEA                                                      & 67 & M1SL                                                           & 81  & FEDFUNDS                                                                       & 103 & WPSFD49207      & 123 & S\&P 500       \\
2  & W875RX1                                                         & 18 & HWIURATIO     & 49 & HOUSTNE                                                              & 59 & CMRMTSPLx                                                            & 68 & M2SL                                                           & 82  & CP3Mx                                                                          & 104 & WPSFD49502      & 124 & S\&P: indust   \\
3  & INDPRO                                                          & 19 & CLF16OV       & 50 & HOUSTMW                                                              & 60 & RETAILx                                                              & 69 & M2REAL                                                         & 83  & TB3MS                                                                          & 105 & WPSID61         & 125 & S\&P div yield \\
4  & IPFPNSS                                                         & 20 & CE16OV        & 51 & HOUSTS                                                               & 61 & AMDMNOx                                                              & 70 & AMBSL                                                          & 84  & TB6MS                                                                          & 106 & WPSID62         & 126 & S\&P PE ratio  \\
5  & IPFINAL                                                         & 21 & UNRATE        & 52 & HOUSTW                                                               & 62 & ANDENOx                                                              & 71 & TOTRESNS                                                       & 85  & GS1                                                                            & 107 & OILPRICEx       & 127 & VXOCLSx        \\
6  & IPCONGD                                                         & 22 & UEMPMEAN      & 53 & PERMIT                                                               & 63 & AMDMUOx                                                              & 72 & NONBORRES                                                      & 86  & GS5                                                                            & 108 & PPICMM          &     &                \\
7  & IPDCONGD                                                        & 23 & UEMPLT5       & 54 & PERMITNE                                                             & 64 & BUSINVx                                                              & 73 & BUSLOANS                                                       & 87  & GS10                                                                           & 109 & CPIAUCSL        &     &                \\
8  & IPNCONGD                                                        & 24 & UEMP5TO14     & 55 & PERMITMW                                                             & 65 & ISRATIOx                                                             & 74 & REALLN                                                         & 88  & AAA                                                                            & 110 & CPIAPPSL        &     &                \\
9  & IPBUSEQ                                                         & 25 & UEMP15OV      & 56 & PERMITS                                                              & 66 & UMCSENTx                                                             & 75 & NONREVSL                                                       & 89  & BAA                                                                            & 111 & CPITRNSL        &     &                \\
10 & IPMAT                                                           & 26 & UEMP15T26     & 57 & PERMITW                                                              &    &                                                                      & 76 & CONSPI                                                         & 90  & COMPAPFFx                                                                      & 112 & CPIMEDSL        &     &                \\
11 & IPDMAT                                                          & 27 & UEMP27OV      &    &                                                                      &    &                                                                      & 77 & MZMSL                                                          & 91  & TB3SMFFM                                                                       & 113 & CUSR0000SAC     &     &                \\
12 & IPNMAT                                                          & 28 & CLAIMSx       &    &                                                                      &    &                                                                      & 78 & DTCOLNVHFNM                                                    & 92  & TB6SMFFM                                                                       & 114 & CUSR0000SAD     &     &                \\
13 & IPMANSICS                                                       & 29 & PAYEMS        &    &                                                                      &    &                                                                      & 79 & DTCTHFNM                                                       & 93  & T1YFFM                                                                         & 115 & CUSR0000SAS     &     &                \\
14 & IPB51222S                                                       & 30 & USGOOD        &    &                                                                      &    &                                                                      & 80 & INVEST                                                         & 94  & T5YFFM                                                                         & 116 & CPIULFSL        &     &                \\
15 & IPFUELS                                                         & 31 & CES1021000001 &    &                                                                      &    &                                                                      &    &                                                                & 95  & T10YFFM                                                                        & 117 & CUSR0000SA0L2   &     &                \\
16 & CUMFNS                                                          & 32 & USCONS        &    &                                                                      &    &                                                                      &    &                                                                & 96  & AAAFFM                                                                         & 118 & CUSR0000SA0L5   &     &                \\
   &                                                                 & 33 & MANEMP        &    &                                                                      &    &                                                                      &    &                                                                & 97  & BAAFFM                                                                         & 119 & PCEPI           &     &                \\
   &                                                                 & 34 & DMANEMP       &    &                                                                      &    &                                                                      &    &                                                                & 98  & TWEXMMTH                                                                       & 120 & DDURRG3M086SBEA &     &                \\
   &                                                                 & 35 & NDMANEMP      &    &                                                                      &    &                                                                      &    &                                                                & 99  & EXSZUSx                                                                        & 121 & DNDGRG3M086SBEA &     &                \\
   &                                                                 & 36 & SRVPRD        &    &                                                                      &    &                                                                      &    &                                                                & 100 & EXJPUSx                                                                        & 122 & DSERRG3M086SBEA &     &                \\
   &                                                                 & 37 & USTPU         &    &                                                                      &    &                                                                      &    &                                                                & 101 & EXUSUKx                                                                        &     &                 &     &                \\
   &                                                                 & 38 & USWTRADE      &    &                                                                      &    &                                                                      &    &                                                                & 102 & EXCAUSx                                                                        &     &                 &     &                \\
   &                                                                 & 39 & USTRADE       &    &                                                                      &    &                                                                      &    &                                                                &     &                                                                                &     &                 &     &                \\
   &                                                                 & 40 & USFIRE        &    &                                                                      &    &                                                                      &    &                                                                &     &                                                                                &     &                 &     &                \\
   &                                                                 & 41 & USGOVT        &    &                                                                      &    &                                                                      &    &                                                                &     &                                                                                &     &                 &     &                \\
   &                                                                 & 42 & CES0600000007 &    &                                                                      &    &                                                                      &    &                                                                &     &                                                                                &     &                 &     &                \\
   &                                                                 & 43 & AWOTMAN       &    &                                                                      &    &                                                                      &    &                                                                &     &                                                                                &     &                 &     &                \\
   &                                                                 & 44 & AWHMAN        &    &                                                                      &    &                                                                      &    &                                                                &     &                                                                                &     &                 &     &                \\
   &                                                                 & 45 & CES0600000008 &    &                                                                      &    &                                                                      &    &                                                                &     &                                                                                &     &                 &     &                \\
   &                                                                 & 46 & CES2000000008 &    &                                                                      &    &                                                                      &    &                                                                &     &                                                                                &     &                 &     &                \\
   &                                                                 & 47 & CES3000000008 &    &                                                                      &    &                                                                      &    &                                                                &     &                                                                                &     &                 &     &                \\ \hline
\end{tabular}
\caption{Macroeconomic Study:  The list of economic variables used in the study.
\label{tab:macrovar}}
\end{sidewaystable}

%

\end{document}